\documentclass[%
 reprint, 
superscriptaddress,
 amsmath,amssymb,
 aps, physrev,
]{revtex4-2}

\usepackage{epsfig,amssymb,amsmath,mathrsfs,amsthm,graphicx,float,color,hyperref,physics,enumitem}
\usepackage{qcircuit}
\usepackage[caption=false]{subfig}
\usepackage{algorithm}
\usepackage{algpseudocode}

\def\<{\langle}\def\>{\rangle}

\theoremstyle{plain}
\newtheorem{theorem}{Theorem}
\newtheorem{lemma}{Lemma}

\theoremstyle{definition}

\theoremstyle{remark}

\definecolor{gold}{rgb}{0.85, 0.65, 0.13}
\usepackage[dvipsnames]{xcolor}

\begin{document}

\title{Subsystem Quantum Error Correction for Noisy Quantum Metrology}

\author{Qiushi Liu}
 \email{qiushiliuph@gmail.com}
 \affiliation{Perimeter Institute for Theoretical Physics, Waterloo, Ontario N2L 2Y5, Canada}
\author{Sisi Zhou}
 \email{sisi.zhou26@gmail.com}
\affiliation{Perimeter Institute for Theoretical Physics, Waterloo, Ontario N2L 2Y5, Canada}
\affiliation{Department of Physics and Astronomy, Department of Applied Mathematics and Institute for Quantum Computing, University of Waterloo, Ontario N2L 3G1, Canada}

\begin{abstract}
   Quantum error correction has been successfully applied to enhance the precision of parameter estimation in the presence of noise. Nonetheless, existing methods require a number of noiseless, controllable ancillae and lack efficient encoding and decoding procedures. In this Letter, we demonstrate that subsystem error correction provides a new direction that can substantially simplify the metrological protocol. We derive general conditions under which subsystem stabilizer codes achieve the Heisenberg limit and show that, for broad classes of noise, this can be realized by syndrome-free protocols using at most a single ancilla qubit. Furthermore, we extend this framework to dynamical error correction and show that Floquet codes can protect time-dependent metrological signals in reaching the Heisenberg limit.

\end{abstract}

\maketitle

\emph{Introduction}---Quantum metrology \cite{Giovannetti2004Science,Giovannetti2006PRL,Giovannetti2011NPhoton,Degen2017RMP,Pezze2018RMP,Braun2018RMP} is the study of measuring physical parameters with high precision beyond classical limits. Given $N$ queries to a parameter-encoding quantum channel $\mathcal E_\theta$, the ultimate precision limit $\delta \hat{\theta}^2$ for estimating $\theta$ can scale as $1/N^2$, known as the Heisenberg limit (HL) \cite{Caves1981PRD,Holland1993PRL}, which surpasses the classical standard quantum limit (SQL) $1/N$ \cite{Caves1980RMP}. However, realistic physical systems are inevitably subject to noise, which typically degrades quantum coherence and restricts the achievable precision to SQL-like scaling \cite{Fujiwara2008JPA,Escher2011NP,Demkowicz-Dobrzanski2012NC,Demkowicz-Dobrzanski2014PRL}. This fundamental limitation has motivated the development of quantum error correction (QEC)-assisted metrology \cite{Arrad2014PRL,Kessler2014PRLQuantumErrorCorrection,Ozeri2013Heisenberg,Duer2014PRL,Demkowicz-Dobrzanski2017PRX,Zhou2018NC,Layden2018npjqi,Layden2019PRL,Chen2024PRL,Mann2025PRXQ,Antu2025Quantum}, which aims to suppress noise while preserving the coherent accumulation of the signal.

In QEC, logical information is redundantly encoded into a larger Hilbert space of physical qubits, enabling error detection and recovery procedures that protect the logical state against noise. The Knill–Laflamme conditions \cite{Knill1997PRA} provide necessary and sufficient criteria for a set of errors to be correctable by a given code. However, QEC-assisted quantum metrology demands additional care, since the signal itself must be meanwhile reserved by the error correction procedure. It has been shown that the HL can be restored using QEC when the ``Hamiltonian‑not‑in‑Kraus‑span'' (HNKS) condition \cite{Zhou2021PRXQ} is satisfied, which ensures the existence of a code supporting a nontrivial logical evolution. Despite its great success, existing theory of QEC for metrology typically assumes noiseless, controllable ancillae for constructing the code when the HNKS is satisfied. Moreover, detecting and correcting errors requires reliable syndrome measurements, which generally involve complex mid-circuit feedback and additional accurately prepared ancillae for readout. Therefore, experimental demonstration of QEC-assisted metrology that attains the HL remains a significant challenge \cite{Unden2016PRL,Wang2022NC}.

Here, we propose a strategy for metrology based on subsystem QEC \cite{Kribs2005PRL,Kribs2006Operator,Poulin2005PRL,Nielsen2007PRA} that substantially relaxes the requirements of conventional approaches. By encoding information in a logical subsystem while allowing noise to be absorbed into gauge degrees of freedom \cite{Poulin2005PRL}, subsystem codes enable simpler recovery procedures with reduced ancilla overhead, making experimental implementation more feasible. Working within the stabilizer formalism \cite{Gottesman1997Stabilizer}, we identify general conditions under which subsystem stabilizer codes can achieve the HL. We then derive sufficient conditions for achieving the HL with syndrome-free codes, using either no ancilla qubits or a single ancilla qubit. Extending beyond static codes, we further demonstrate that dynamical Floquet codes \cite{Hastings2021Quantum} can encode a time‑varying Hamiltonian and reach the HL. Our results establish subsystem QEC as a practical and resource-efficient paradigm for noisy quantum metrology.

\emph{Setting}---We consider the estimation of an unknown parameter $\theta$ using $N$ queries to a parameterized quantum channel $\mathcal E_\theta(\cdot)=\sum_{i=1}^r K_i^\theta (\cdot) K_i^{\theta\dagger}$. In a prototypical setting, such as noisy phase estimation, the Kraus operators can be factorized as 
\begin{equation}\label{eq:Kraus operators}
    K_i^\theta = \sqrt{p_i}E_ie^{-\mathrm i\theta H},
\end{equation} 
where $H$ is the signal Hamiltonian and $\{E_i\}$ is a set of noise operators occurring with probability distribution $\{p_i\}$. We assume that $\{E_i\}$ are multi-qubit Pauli operators to facilitate the use of the stabilizer formalism. We focus on local estimation around a reference value $\theta_0$, which can be set to $\theta_0=0$ without loss of generality by an appropriate unitary control.

Given a single-parameter quantum state $\rho_\theta$, the mean squared error (MSE) $\delta \hat{\theta}^2:=\mathbb E\left[(\hat{\theta}-\theta)^2\right]$ for any unbiased estimator $\hat{\theta}$ is fundamentally constrained by the quantum Cramér-Rao bound \cite{Helstrom1976Quantum,Holevo2011Probabilistic,Braunstein1994PRL} $\delta \hat{\theta}^2 \ge 1/[\nu F^Q(\rho_\theta)]$, where $F^Q(\rho_\theta)$ is the quantum Fisher information (QFI) and $\nu$ is the number of experimental repetitions. As the bound is saturable in the limit of $\nu \to \infty$, the QFI serves as the central figure of merit for single-parameter quantum metrology, quantifying the sensitivity of $\rho_\theta$ to small changes in $\theta$.

In quantum channel estimation, the goal is thus to design a protocol that yields an output state $\rho_\theta$ with high QFI. If $\mathcal E_\theta$ is noiseless, the optimal scaling of QFI is the HL $F^Q(\rho_\theta)\propto N^2$ by, e.g., repeatedly applying $\mathcal E_\theta$ $N$ times to an equal superposition of eigenstates of $H$ with minimal and maximal eigenvalues \cite{Giovannetti2006PRL}. 
However, the performance of such strategies is generally not robust in the presence of noise. HL can be restored using QEC provided that the HNKS condition \cite{Zhou2021PRXQ} is satisfied: $H\notin \mathsf{Span}\{K_i^{\theta \dagger}K_j^\theta, \forall i,j\}$, which guarantees the existence of a QEC code under which the noisy channel $\mathcal E_\theta$ combined with encoding and recovery procedures simulates a nontrivial logical unitary evolution in the code space.

\emph{Subsystem stabilizer codes}---The stabilizer formalism provides a compact framework for describing QEC codes \cite{Gottesman1997Stabilizer}. The central tool is the Pauli group $\mathcal P_n$ on $n$ qubits, which consists of all $n$-fold tensor products of single-qubit Pauli operators with phases $\{\pm 1,\pm \mathrm i\}$. An $[n,k]$ stabilizer code is given by the $+1$ eigenspace $\mathcal H_C$ of an abelian subgroup $\mathcal S=\langle S_1,\dots,S_s\rangle\subseteq \mathcal P_n$ with $s=n-k$ independent generators. Up to a Clifford transformation, the set of generators is isomorphic to single-qubit Pauli operators $\{Z_1',\dots,Z_s'\}$, and the code space is described by $k$ virtual qubits with logical Pauli operators $\{Z_i',X_i'\}_{i=s+1}^{n}$, which satisfy the canonical commutation relations: each pair $(Z_i',X_i')$ anticommutes, while operators belonging to different pairs commute. In this representation, $\mathcal P_n=\langle \mathrm i, Z_1',\dots,Z_n', X_1',\dots,X_n'\rangle$, where $\mathrm i$ denotes $\mathrm i I^{\otimes n}$ for simplicity.

By introducing the code space, the full Hilbert space is decomposed into the code space and its orthogonal complement, $\mathcal H=\mathcal H_C \oplus \mathcal H_C^{\perp}$. In subsystem QEC, the code space $\mathcal H_C$ is further factorized as $\mathcal H_C=\mathcal H_L \otimes \mathcal H_G$, where $\mathcal H_L$ encodes the logical information and $\mathcal H_G$ represents a gauge subsystem. In the stabilizer formalism for subsystem codes \cite{Poulin2005PRL}, this subsystem code structure is specified by a \emph{gauge group} $\mathcal G\subseteq \mathcal N(\mathcal S)$, the normalizer of $\mathcal S$, given by $\mathcal N(\mathcal S)=\langle \mathrm i, Z_1',\dots, Z_n', X_{s+1}', \dots, X_n' \rangle$. $\mathcal G$ can be generated by a set of \emph{check operators} that commute with stabilizers
\begin{equation}
    \mathcal G=\langle \mathrm i, Z_1', \dots, Z_s',Z_{s+1}',\dots,Z_{s+g}',X_{s+1}', \dots, X_{s+g}' \rangle
\end{equation}
for $g \le n-s$. The stabilizer group is simply the center of the gauge group, and the stabilizer information can be retrieved by measuring check operators. The logical group $\mathcal L$ is isomorphic to the quotient group $\mathcal N(\mathcal S)/\mathcal G$: 
\begin{equation}
    \mathcal N(\mathcal S)/\mathcal G \cong \mathcal L=\langle Z_{s+g+1}', \dots, Z_n', X_{s+g+1}', \dots, X_n'\rangle.
\end{equation}
As $\mathcal G$ and $\mathcal L$ commute, they induce a tensor product structure \cite{Zanardi2004PRL} on $\mathcal H_C=\mathcal H_L \otimes \mathcal H_G$ such that $\mathcal G$ acts nontrivially only on $\mathcal H_G$ and $\mathcal L$ acts nontrivially only on $\mathcal H_L$. This defines a subsystem stabilizer code $[n,k,g]$ with $k=n-s-g$ logical qubits. For convenience we also use the tuple $(\mathcal S,\mathcal G)$ to specify a subsystem stabilizer code.

For Pauli noise $\{{E_a}\subseteq \mathcal P_n\}$, a necessary and sufficient condition for correctability via subsystem stabilizer codes \cite{Poulin2005PRL} is
\begin{equation}
E_a E_b \notin \mathcal N(\mathcal S)-\mathcal G,
\end{equation}
which is a direct consequence of the Knill-Laflamme condition for general subsystem codes \cite{Kribs2005PRL,Kribs2006Operator,Nielsen2007PRA}: 
$\Pi_C E_a^\dagger E_b \Pi_C =  I_L\otimes g^{ab}_G$,
where $\Pi_C$ is the projection on the code space and $g^{ab}_G$ is an arbitrary operator acting on the gauge subsystem.

It is noteworthy that, in subsystem codes, the simplification of the recovery procedure is twofold: partitioning the encoded degrees of freedom into logical and gauge subsystems reduces the number of required stabilizer measurements, and syndrome information can be extracted via check operator measurements that typically involve lower-weight operators  \cite{Bacon2006PRA,Bravyi2013QIC}.

\emph{Effective dynamics by QEC}---A central mechanism enabling the recovery of HL using QEC is the realization of an effective logical unitary evolution. In Ref.~\cite{SM_note}, we show that for correctable Pauli errors, QEC projects the noisy evolution onto an effective Hamiltonian acting within the code space.

\begin{lemma} \label{lem:effective dynamics informal} (Informal)
    For a probe state $\rho$ supported on the code space with projection $\Pi_C$ and recovery operation $\mathcal R$, the resulting evolution $\mathcal R\circ \mathcal E_\theta(\rho)$ is governed by an effective Hamiltonian $H_{\mathrm{eff}}$ in the vicinity of $\theta=0$.

    In particular, for subsystem stabilizer codes, the recovery operation generally consists of standard syndrome decoding together with resetting the gauge subsystem. The syndrome extraction can also be constructed by measuring check operators.
\end{lemma}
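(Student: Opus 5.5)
We expand $\mathcal E_\theta$ to first order around $\theta=0$ and use the correctability of the Pauli errors to show that, after recovery, the channel acts on the logical subsystem as a unitary generated by a Hermitian operator. Using Eq.~\eqref{eq:Kraus operators}, $K_i^\theta=\sqrt{p_i}E_i(I-\mathrm i\theta H)+O(\theta^2)$. Then
\begin{equation}
\mathcal E_\theta(\rho)=\sum_i p_i E_i\rho E_i^\dagger-\mathrm i\theta\sum_i p_i\bigl(E_iH\rho E_i^\dagger-E_i\rho HE_i^\dagger\bigr)+O(\theta^2).
\end{equation}
For $\rho=\Pi_C\rho\Pi_C$, we write $H=\Pi_CH\Pi_C+\Pi_C^\perp H\Pi_C+\dots$.

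The recovery $\mathcal R$ for the subsystem stabilizer code $(\mathcal S,\mathcal G)$ has three steps. First, it measures the stabilizer syndrome, either directly or by multiplying outcomes of check-operator measurements, since $\mathcal S$ is the center of $\mathcal G$. Second, for syndrome $\sigma$ it applies a Pauli correction $R_\sigma$. Third, it resets the gauge subsystem via $\mathrm{Tr}_G(\cdot)\otimes\tau_G$.

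The correctability condition $E_aE_b\notin\mathcal N(\mathcal S)-\mathcal G$ is used as follows. Any two errors with the same syndrome differ by an element of $\mathcal N(\mathcal S)$, hence by an element of $\mathcal G$ up to phase. So $R_\sigma E_i\propto g_i\in\mathcal G$, which acts as $I_L\otimes g_{i,G}$ and is erased by the gauge reset. This gives the zeroth-order term $\mathcal R(\sum_ip_iE_i\rho E_i^\dagger)=\mathrm{Tr}_G\rho\otimes\tau_G$, which is the identity on the logical information.

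For the first-order term, we decompose $H$ in the Pauli basis, $H=\sum_P h_PP$. We then evaluate $\mathcal R(E_iP\rho E_i^\dagger)$ for each Pauli $P$:
\begin{itemize}
\item If $P\notin\mathcal N(\mathcal S)$, then $E_iP$ and $E_i$ lie in different syndrome sectors. The cross term is killed by the syndrome projection, which is block-diagonal in syndromes.
\item If $P\in\mathcal N(\mathcal S)$, then $P\propto L_P\,g_P$ with $L_P\in\mathcal L$ and $g_P\in\mathcal G$. After correction and gauge reset, the term contributes $L_P\,\mathrm{Tr}_G(g_P\rho)\otimes\tau_G$.
\end{itemize}
Collecting terms gives
\begin{equation}
\mathcal R\circ\mathcal E_\theta(\rho)=\bigl(\rho_L-\mathrm i\theta[H_{\mathrm{eff}},\rho_L]\bigr)\otimes\tau_G+O(\theta^2),
\end{equation}
with $H_{\mathrm{eff}}=\sum_{P\in\mathcal N(\mathcal S)}h_P L_P\,\langle g_P\rangle_G$. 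Here $\langle g_P\rangle_G$ is evaluated in the reset gauge state when the probe has product form $\rho_L\otimes\tau_G$; the general case is handled by first resetting the gauge. Hermiticity of $H_{\mathrm{eff}}$ follows from $H=H^\dagger$ and the fact that $\mathcal R\circ\mathcal E_\theta$ is a CPTP map. This establishes the effective dynamics $e^{-\mathrm i\theta H_{\mathrm{eff}}}$ near $\theta=0$. In the non-subsystem case ($g=0$) it reduces to $H_{\mathrm{eff}}=\Pi_CH\Pi_C$ restricted to the code space.

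The main obstacle is the first-order cross terms $E_iH\rho E_j^\dagger$ in which the Kraus operators coincide but $H$ moves the state out of the code space. We must show that the recovery maps these consistently, with no dependence on $i$ beyond a gauge factor, so that the first-order map is a commutator and not a dissipative term. We will handle this by checking that $R_\sigma E_i$ is fixed up to gauge within each syndrome class, and by using the fact that the partial trace over $G$ commutes with logical operators. The $O(\theta^2)$ remainder is uniformly bounded by $\|H\|^2$, which justifies the local statement.
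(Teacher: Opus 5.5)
Your proposal is correct and follows essentially the paper's route: a first-order expansion, then syndrome projection with a representative Pauli correction so that $R_\sigma E_i$ acts as $I_L\otimes g_{i,G}$, then a gauge reset, giving $H_{\mathrm{eff}}=\mathrm{Tr}_G[(I_L\otimes\tau_G)\Pi_CH\Pi_C]$. Your Pauli-by-Pauli split ($P\in\mathcal N(\mathcal S)$ or not) is equivalent to the paper's split $H=(\Pi_C+\Pi_E)H(\Pi_C+\Pi_E)$ with $\Pi_{\beta(i)}E_i\Pi_E=0$, since $\Pi_CP\Pi_C=0$ exactly when $P\notin\mathcal N(\mathcal S)$, and the only thing to add is the fact behind the paper's check-measurement lemma: checks commute with $\mathcal S$ and $\mathcal L$, so on each syndrome sector their projectors act as $I_L$ tensored with gauge operators satisfying a completeness relation over outcomes consistent with that syndrome, and the gauge reset therefore erases them.
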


Formal statements and proofs are provided in Lemmas~\ref{lem:effective dynamics stabilizer code}, \ref{lem:subsystem code dynamics} and \ref{lem:subsystem code dynamics measuring gauge} in Ref.~\cite{SM_note}. We highlight that subsystem codes for metrology combine standard decoding and gauge reset. While the latter is unnecessary for static quantum memory, it is usually essential here to ensure coherent accumulation under a nontrivial effective logical Hamiltonian $H_{\mathrm{eff}}$, unless $H_{\mathrm{eff}}$ acts trivially on the gauge subsystem. Moreover, we show that measuring check operators that are potentially easier does not degrade the metrological performance and can therefore be employed whenever experimentally convenient. Note that, similar to previous QEC sensing protocols, we also assume noiseless quantum operations, and our approach does not guarantee full fault tolerance.

\emph{Achieving the HL via subsystem codes}---The key to achieving the HL with $\rho_\theta=(\mathcal R\circ \mathcal E_\theta)^N(\rho)$ is using a code that corrects all the errors and preserves the signal through a nontrivial effective logical unitary. Formally we have (see Ref.~\cite{SM_note} for the proof):
\begin{theorem} \label{thm:subsystem code HL condition}
    Consider $N$ queries to $\mathcal E_\theta$ defined by Eq.~(\ref{eq:Kraus operators}). Let the Hamiltonian $H=\sum_k \lambda_k Q_k$ be decomposed in the Pauli basis $\{Q_k\}$. A subsystem stabilizer code $(\mathcal S,\mathcal G)$ achieves the HL if: 
    \begin{enumerate}[label=(\roman*)]
    \item $E_i E_j \notin \mathcal N(\mathcal S)-\mathcal G,\ \forall i,j \in \{1,\dots,r\}$;
    \item there exists at least one Pauli component $Q_{k_0} \in \mathcal N(\mathcal S) -\mathcal G$ such that 
    \begin{equation}
        \sum_{j:Q_j Q_{k_0} \in\mathcal S} \lambda_j \neq 0,
    \end{equation}
    where the sum runs over all Pauli components equivalent to $Q_k$ modulo stabilizers.
\end{enumerate}
\end{theorem}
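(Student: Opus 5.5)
The plan is to reduce the statement to Lemma~\ref{lem:effective dynamics informal}: one round of $\mathcal R\circ\mathcal E_\theta$ acting on code-space states will be shown to be, to first order in $\theta$, a unitary generated by a nontrivial logical Hamiltonian on $\mathcal H_L$ tensored with a fixed gauge state. Iterating this $N$ times then gives a logical unitary $e^{-\mathrm i N\theta H_{\mathrm{eff}}^L}$. Preparing an equal superposition of the extremal eigenstates of $H_{\mathrm{eff}}^L$ yields QFI $N^2(\lambda_{\max}-\lambda_{\min})^2$ at $\theta=0$, which is the Heisenberg limit.

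First, expand $K_i^\theta=\sqrt{p_i}E_i(I-\mathrm i\theta H+O(\theta^2))$ and act on $\rho=\rho_L\otimes\tau_G$ supported on $\mathcal H_C$. Condition (i) is the Poulin correctability criterion $E_iE_j\notin\mathcal N(\mathcal S)-\mathcal G$. For each $E_i$ it therefore gives a syndrome-decoding Pauli $R_{s(i)}$ with $R_{s(i)}E_i\Pi_C\in\mathcal G\,\Pi_C$, so the error acts only on the gauge factor. The recovery then measures the syndrome, applies $R_s$, and resets the gauge subsystem to $\tau_G$. This removes the zeroth-order errors exactly.

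For the first-order term, project each Pauli component of $R_{s(i)}E_iQ_k$ onto the code space. There are three cases:
\begin{itemize}
\item If $Q_k\notin\mathcal N(\mathcal S)$, then $Q_k$ maps the code space to a different syndrome. If that syndrome is not $s(i)$, the branch is orthogonal and contributes only at order $\theta^2$. If it is $s(i)$, the corresponding product lies in $E_i\mathcal G$ up to stabilizers, and the contribution can be absorbed into the gauge by the reset.
\item If $Q_k\in\mathcal G$, it acts trivially on $\mathcal H_L$, so the reset removes it.
\item If $Q_k\in\mathcal N(\mathcal S)-\mathcal G$, it acts on the code space as $\bar L_k\otimes G_k$, with $\bar L_k$ a nontrivial logical Pauli.
\end{itemize}
After the gauge reset, the effective logical Hamiltonian is
\[
H^L_{\mathrm{eff}}=\sum_{k:Q_k\in\mathcal N(\mathcal S)-\mathcal G}\lambda_k\,\mathrm{tr}(G_k\tau_G)\,\bar L_k .
\]
This uses that the probabilities sum to one over $i$ and that $R_{s(i)}E_i$ commutes or anticommutes with $Q_k$ uniformly on the code space.

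Next, group the components by logical class. Two components give the same logical Pauli exactly when $Q_jQ_{k_0}\in\mathcal G$. When they differ by a stabilizer, the signs and gauge parts coincide, so their coefficients add, giving $\sum_{j:Q_jQ_{k_0}\in\mathcal S}\lambda_j$. Condition (ii) says this sum is nonzero.

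The main obstacle is that components differing from $Q_{k_0}$ by a nontrivial gauge element carry gauge factors $G_j$, which could cancel the coefficient of $\bar L_{k_0}$ after tracing against $\tau_G$. The plan is to exploit the freedom in the reset state and choose $\tau_G$ to be a stabilizer state of the gauge Paulis, for example the $+1$ eigenstate of $Z'_{s+1},\dots,Z'_{s+g}$. Over the finitely many relevant gauge cosets, $\mathrm{tr}(G_j\tau_G)$ is then a polynomial-type function of the choice of gauge eigenbasis. If the stabilizer-class coefficient is nonzero, some gauge stabilizer state, or more robustly some product state, keeps the total coefficient of $\bar L_{k_0}$ nonzero, by a nonvanishing-polynomial argument. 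I expect this step to need the most care.

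Once $H^L_{\mathrm{eff}}\neq 0$ modulo multiples of identity, it has distinct extremal eigenvalues. The remaining steps are routine:
\begin{itemize}
\item The $O(\theta^2)$ terms per step contribute $O(N\theta^2)$ to the state and do not affect the QFI at $\theta=0$.
\item By the chain rule for channel composition, the derivative of $\rho_\theta$ at $\theta=0$ equals that of the unitary model $e^{-\mathrm iN\theta H^L_{\mathrm{eff}}}$.
\end{itemize}
This gives $F^Q\propto N^2$.
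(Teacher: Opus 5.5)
Your proposal is correct and has the same skeleton as the paper's proof. Both use the first-order dynamics of Lemma~\ref{lem:subsystem code dynamics}: decoding plus gauge reset, so only $\Pi_C H\Pi_C$ survives at order $\theta$. Both then group the Pauli components by logical class, choose a gauge state that keeps the coefficient of $\bar L_{k_0}$ nonzero, and use a GHZ-type logical probe.

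The only genuinely different step is the one you flagged. The paper takes $\rho_G$ to be an eigenvector, with largest absolute eigenvalue, of the Hermitian gauge operator $A=\sum_{j:Q_jQ_{k_0}\in\mathcal G}\lambda_j H'_{G,j}$. This operator is nonzero because distinct $\mathcal S$-cosets inside $Q_{k_0}\mathcal G$ give distinct, hence linearly independent, gauge Paulis. So the nonzero stabilizer-class sum of condition (ii) cannot be cancelled by the other cosets.

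Your nonvanishing-polynomial argument over product states rests on exactly this fact: distinct gauge Paulis give distinct multilinear monomials in the Bloch components. You should state it explicitly rather than appeal to a ``polynomial-type function of the choice of gauge eigenbasis.'' With that fact in place, your route is valid and has the mild advantage of yielding an easily prepared product gauge state. The paper's choice is a one-liner that also maximizes $|\mathrm{tr}(A\rho_G)|$, i.e.\ the signal strength. Your concrete example, the $+1$ eigenstate of $Z'_{s+1},\dots,Z'_{s+g}$, can fail even when only the stabilizer class is present: $\mathrm{tr}(G_{k_0}\tau_G)=0$ whenever $G_{k_0}$ has an $X'$ or $Y'$ factor.

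In the other direction, your probe choice is more careful than the paper's. You take extremal eigenvectors of $H^L_{\mathrm{eff}}$, whereas the paper uses those of $H'_{L,k_0}$. With degeneracy, an unlucky choice of the latter gives zero variance, e.g.\ $H_{\mathrm{eff}}=Z_1-Z_1Z_2$ with $(\ket{00}+\ket{10})/\sqrt2$.

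Finally, your case split for $Q_k\notin\mathcal N(\mathcal S)$ is unnecessary. The operator $E_iQ_k\Pi_C$ always carries a syndrome different from $s(i)$, while the right factor $\rho E_i^\dagger R_m^\dagger$ forces $m=s(i)$. So these components vanish identically at first order, which is the paper's $\Pi_{\beta(i)}E_i\Pi_E=0$, and nothing has to be absorbed by the reset.
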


Theorem~\ref{thm:subsystem code HL condition} provides a criterion for achieving the HL with a given subsystem code. Condition (i) is simply the Knill-Laflamme condition, and Condition (ii) ensures that the net contribution of all equivalent logical operators $Q_j$ differing by stabilizers is not vanishing. 

Furthermore, the following theorem gives a sufficient condition for the existence of such HL-achieving subsystem codes:

\begin{theorem}\label{thm:HL syndrome free}
    When $H\notin\mathsf{Span}\{\langle E_1,\dots,E_r \rangle\}$, the HL can be achieved by a syndrome-free subsystem stabilizer code $(\{I\},\mathcal G)$ with at most one noiseless ancillary qubit. The ancilla can be omitted, if, moreover, 

    \begin{enumerate}[label=(\roman*)]
        \item the decomposition of $H=\sum_k\lambda_kQ_k$ contains a Pauli term $Q_{k_0} \notin \mathsf{Span}\{\langle E_1,\dots,E_r \rangle\}$ such that $[Q_{k_0},E_i]=0,\ \forall i=1,\dots,r$; or
        \item the number of independent generators for a maximal abelian subgroup (modulo phase) of $\langle E_1,\dots,E_r \rangle$ is at most $n-1$.
    \end{enumerate} 
\end{theorem}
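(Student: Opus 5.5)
The plan is to reduce everything to Theorem~\ref{thm:subsystem code HL condition} with trivial stabilizer group $\mathcal S=\{I\}$. Then $\mathcal N(\mathcal S)=\mathcal P_n$, condition (i) reads $E_iE_j\in\mathcal G$ for all $i,j$, and condition (ii) asks for a Pauli component $Q_{k_0}\notin\mathcal G$ with $\lambda_{k_0}\neq0$. The sum in (ii) collapses to this single term, since only $Q_{k_0}$ itself is equivalent to $Q_{k_0}$ modulo $\{I\}$ up to phase.

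A syndrome-free code is therefore exactly a gauge group $\mathcal G$ that contains the group $\mathcal E:=\langle E_1,\dots,E_r\rangle$ (or at least all products $E_iE_j$) and misses some Pauli term of $H$. The recovery of Lemma~\ref{lem:effective dynamics informal} is then just a gauge reset. The hypothesis $H\notin\mathsf{Span}\,\mathcal E$ gives a term $Q_{k_0}$ with $\lambda_{k_0}\neq 0$ and $Q_{k_0}\notin\mathcal E$ modulo phase. So the task is to find a group $\mathcal G\supseteq\mathcal E$ that has trivial center modulo phases, is not all of $\mathcal P_n$, and excludes $Q_{k_0}$.

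\textbf{Step 1 (symplectic picture).} Work in $\mathbb F_2^{2n}$ modulo phases. A gauge group with trivial stabilizer is a symplectic (non-degenerate) subspace $V$. The logical group is its symplectic complement $V^\perp$, which must be nonzero. Thus we need a symplectic subspace $V\supseteq \mathcal E$ with $Q_{k_0}\notin V$.

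\textbf{Step 2 (one ancilla, general case).} Adjoin one ancilla qubit $a$ and let $E_i\mapsto E_i\otimes I_a$ and $H\mapsto H\otimes I_a$. Take a maximal symplectic subspace of $\mathcal P_n$ containing $\mathcal E$ but not $Q_{k_0}$. Concretely, write $\mathcal E$ as $\langle$ hyperbolic pairs $\rangle\oplus$ isotropic radical $R$. Each generator $r\in R$ is completed with a partner $\tilde r$ chosen so that $Q_{k_0}$ stays outside. Since $Q_{k_0}\notin\mathcal E$, the radical can be extended and $Q_{k_0}$ avoided; if $Q_{k_0}$ falls into the completed span, partners are swapped, and any remaining obstruction is absorbed by pairing with ancilla Paulis $X_a,Z_a$.

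A cleaner choice is to let $\mathcal G$ be the hyperbolic completion of $\mathcal E$ on the system qubits together with a twist. Pick $P$ with $Q_{k_0}\otimes Z_a$-type logical operators, i.e. take logical pair $(\bar Z,\bar X)=(Q_{k_0}Z_a\cdot(\dots),X_a\cdot(\dots))$ chosen in $\mathcal E^\perp$-adjusted form. Then let $\mathcal G=\mathcal L^\perp$. Here $\mathcal L$ is generated by one anticommuting pair that commutes with all $E_i$, and some term of $H$, namely $Q_{k_0}$, lies outside $\mathcal L^\perp$. This yields $\mathcal E\subseteq\mathcal G$ and $Q_{k_0}\notin\mathcal G$.

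\textbf{Step 3 (no ancilla).} In case (i), $Q_{k_0}$ commutes with $\mathcal E$, so $Q_{k_0}\in\mathcal E^\perp\setminus\mathcal E$. Since $Q_{k_0}\notin\mathcal E$, it can be paired with some $P\in\mathcal E^\perp$ that anticommutes with it; $P$ exists because the form restricted to $\mathcal E^\perp$ has radical $\mathcal E\cap\mathcal E^\perp$. Set $\mathcal L=\langle Q_{k_0},P\rangle$ and $\mathcal G=\mathcal L^\perp$. In case (ii), a maximal isotropic subspace of $\mathcal E$ having dimension $\le n-1$ means that a symplectic completion of $\mathcal E$ leaves at least one spare hyperbolic pair in $\mathcal E^\perp$. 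That pair, chosen to anticommute with $Q_{k_0}$, serves as $\mathcal L$.

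\textbf{Main obstacle.} The hard step is guaranteeing $Q_{k_0}\notin\mathcal G=\mathcal L^\perp$, i.e. that some logical operator anticommutes with $Q_{k_0}$, while $\mathcal L$ still commutes with all of $\mathcal E$. This requires $Q_{k_0}\notin(\mathcal E^\perp)^\perp=\mathcal E$, which the hypothesis gives. The delicate part is fitting a full hyperbolic pair inside $\mathcal E^\perp$, and this is exactly where the ancilla (or conditions (i) and (ii)) is needed. The dimension count $\dim\mathcal E^\perp=2n-\dim\mathcal E$, together with the radical analysis, should be verified carefully.
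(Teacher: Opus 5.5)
Your reformulation is correct and differs from the paper's route. You characterize a syndrome-free code by $\mathcal G=\mathcal L^\perp$ for a nondegenerate $\mathcal L\subseteq\mathcal E^\perp$ with $Q_{k_0}\notin\mathcal L^\perp$. The paper instead uses Lemma~\ref{lem:clifford existence} (Witt's theorem) to find a Clifford sending the noise generators to $\langle Z_1,\dots,Z_\alpha,X_1,\dots,X_\beta\rangle$ and $Q_{k_0}$ to $Z_{\alpha+1}$ or $X_1Z_{\alpha+1}$. Your version is coordinate-free and makes the role of $\alpha$ transparent. Your case (i) is complete: $Q_{k_0}\in\mathcal E^\perp$ lies outside the radical $R=\mathcal E\cap\mathcal E^\perp$ of $\mathcal E^\perp$, so it has a partner $P$ there.

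\textbf{The gap is in case (ii), and hence in the ancilla case.} You assert that the spare pair can be ``chosen to anticommute with $Q_{k_0}$.'' Your last paragraph treats this as following from two separate facts: some element of $\mathcal E^\perp$ anticommutes with $Q_{k_0}$ (since $Q_{k_0}\notin(\mathcal E^\perp)^\perp$), and $\mathcal E^\perp$ contains a hyperbolic pair. These do not combine by themselves. The witness may lie in $R$, which pairs with nothing in $\mathcal E^\perp$, while the spare pair produced by completing $\mathcal E$ may commute with $Q_{k_0}$. For example, take $n=2$, $\mathcal E=\langle X_1\rangle$, $Q_{k_0}=Z_1$ (so $\alpha=1$). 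The natural spare pair $(Z_2,X_2)$ gives $\mathcal G=\langle X_1,Z_1\rangle\ni Q_{k_0}$, so the signal is pure gauge and $H_{\mathrm{eff}}\propto I$.

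\textbf{The missing step is a twist.} Write $\mathcal E^\perp=R\oplus W$ with $W$ nondegenerate. With $\dim\mathcal E=2\beta+\rho$, $\rho=\dim R$, and $\alpha=\beta+\rho$, one gets $\dim W=2n-2\beta-2\rho=2(n-\alpha)$; this is the dimension count you wanted. If $Q_{k_0}$ anticommutes with some $w\in W$, pair $w$ with a partner in $W$. Otherwise, pick $r\in R$ anticommuting with $Q_{k_0}$ and any hyperbolic pair $(w_1,w_2)$ in $W$ (which exists iff $\alpha\le n-1$), and set $\mathcal L=\langle rw_1,w_2\rangle$. Both generators commute with $\mathcal E$, they anticommute with each other, and $Q_{k_0}$ anticommutes with $rw_1$. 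In the example this gives $\mathcal L=\langle X_1X_2,Z_2\rangle$ and $\mathcal G=\langle X_1,Z_1Z_2\rangle$, which is exactly the gauge group of the paper's bit-flip example. The paper gets the same effect by demanding $UQ_{k_0}U^\dagger=X_1Z_{\alpha+1}$, i.e.\ a frame in which $Q_{k_0}$ already has a nontrivial logical part.

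With this fix your Step~2 becomes unnecessary. Appending an ancilla leaves $\alpha$ unchanged and adds $\langle X_a,Z_a\rangle$ to $W$, so the general claim is case (ii) on $n+1$ qubits. As written, Step~2 (partner swapping, absorbing obstructions) is not an argument. A representative of the form $Q_{k_0}Z_a$ does not commute with $\mathcal E$ whenever $Q_{k_0}$ anticommutes with some $E_i$, and the unspecified correction factor $(\dots)$ is precisely the $r$ that needs constructing.
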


The proof is provided in Ref.~\cite{SM_note}. As a sketch, since there exists a Pauli component $Q_{k_0}$ of $H$ outside the noise algebra, by a Clifford transformation the noise generators can be mapped to a canonical form acting only on a set of gauge qubits, while $Q_{k_0}$ is mapped to an operator acting nontrivially on an additional logical qubit. If $Q_{k_0}$ commutes with the noise algebra, the logical qubit is directly decoupled from the gauge subsystem. Besides, the ancilla is only required when the noise algebra already occupies all $n$ qubits. Finally, we remark that, when Condition~(i) of Theorem~\ref{thm:HL syndrome free} is satisfied, our protocol avoids gauge reset and coincides with a previous result \cite[Theorem~3]{Liu2025PRL} obtained via a different proof technique \cite{previous_comparison_note}.  

The gauge reset must be performed in the tensor product basis determined by the gauge group, which can be implemented by applying Clifford circuits that transform between the computational basis and the tensor product basis. 
Alternatively, it can also be achieved by measuring corresponding check operators and applying gauge bit flips conditioned on check outcomes. Here for syndrome-free code, we can take the former approach to circumvent explicit measurements for gauge reset, which may be advantageous when readout noise is a dominant source of errors in quantum devices \cite{Arute2019Nature,Bengtsson2024PRL}. 

\emph{Phase estimation under bit flip noise}---
We begin with a simple illustrative example that provides insight into our approach. Consider single-qubit phase estimation in the presence of bit flip noise, where $U_\theta=e^{-\mathrm i\theta H}$ for $H=Z$ and $E_1=I,E_2=X$. By Theorem~\ref{thm:HL syndrome free}, there exists a syndrome-free subsystem code that achieves the HL, using one noiseless ancilla qubit. 

A single step of our syndrome-free QEC protocol is illustrated in Fig.\ref{fig:QEC bit flip noise}\subref{subfig:QEC single-qubit bit flip noise}. We denote by $U_\mathrm{CNOT}^{2\to 1}$ the CNOT gate taking the second qubit as the control and the first qubit as the target. Under the Clifford transformation $U_\mathrm{CNOT}^{2\to 1}(\cdot)(U_\mathrm{CNOT}^{2\to 1})^\dagger$, $Z_1,X_1$ are mapped to $Z_1Z_2,X_1$, respectively. The noise therefore acts only on the first qubit, which we identify as the gauge qubit, while the second qubit serves as the logical qubit. Choosing the gauge and logical states as $\ket{\psi_G}=\ket{0}$ and $\ket{\psi_L}=\ket{+}$, the probe state is $\ket{\psi}=(U_\mathrm{CNOT}^{2\to 1})^\dagger\ket{0}\ket{+}=\frac{1}{\sqrt{2}}(\ket{00}+\ket{11})$ in the computational basis. The recovery map is simply a gauge reset operation in the transformed basis. In contrast, the standard QEC protocol for metrology \cite{Kessler2014PRLQuantumErrorCorrection} requires mid-circuit feedback measurements and recovery (see Fig.~\ref{fig:QEC bit flip noise}\subref{subfig:conventional QEC single-qubit bit flip noise}).

\begin{figure} [!htbp]
    \centering
    \subfloat[Single-qubit syndrome-free scenario.]{\mbox{\Qcircuit @C=1em @R=1.2em {
     &\gate{\mathcal E_\theta}& \targ & \gate{\text{Reset}} & \targ & \qw \\
     &\qw & \ctrl{-1} & \qw & \ctrl{-1} & \qw 
    }
    }
    \label{subfig:QEC single-qubit bit flip noise}}\quad
    \subfloat[Single-qubit standard QEC scenario.]{\mbox{\Qcircuit @C=1em @R=1.2em {
     &\gate{\mathcal E_\theta}& \ctrl{2} & \qw & \gate{X} & \qw \\
     &\qw & \qw & \ctrl{1} & \qw & \qw \\
     &\qw & \targ & \targ & \meter \cwx[-2] &
    }
    }
    \label{subfig:conventional QEC single-qubit bit flip noise}}
    \\
    \subfloat[Multi-qubit syndrome-free scenario.]{\mbox{
    \Qcircuit @C=1em @R=1.5em {
     &\gate{\mathcal E_\theta}& \targ & \qw & \cdots & & \qw & \gate{\text{Reset}} & \qw & \cdots & & \targ & \qw \\
     & \vdots &  & & \cdots & &      & \vdots &      & \cdots &  &  & \\
     &\gate{\mathcal E_\theta}& \qw & \qw & \cdots & & \targ & \gate{\text{Reset}} & \targ & \cdots & & \qw & \qw \\
     &\qw & \ctrl{-3} & \qw & \cdots & &\ctrl{-1} & \qw & \ctrl{-1} & \cdots & & \ctrl{-3} & \qw 
    }
    }
    \label{subfig:QEC multi-qubit bit flip noise}} 
    \caption{\label{fig:QEC bit flip noise} Circuit diagram for one step of QEC in phase estimation under bit flip noise. In (a) and (c), the reset operation converts the state back to $\ket{0}$. The qubit on the last line represents the noiseless ancillary qubit, serving as the logical qubit. The qubits above it are gauge qubits. In (b), the state of the first two qubits is $(\ket{00}+\ket{11})/\sqrt{2}$. The qubit on the last line records the $Z_1Z_2$ syndrome that determines whether or not an $X_1$ correction is applied.}
\end{figure}

The QEC procedure introduced above naturally generalizes to the scenario of multiple qubits. Consider a system of $n_q$ qubits, each subject to independent single-qubit bit flip noise, and the task is to estimate $\theta$ using $N$ queries to $\mathcal E_\theta^{\otimes n_q}$. We summarize the procedure as follows, with a single step of the multi-qubit protocol illustrated in Fig.~\ref{fig:QEC bit flip noise}\subref{subfig:QEC multi-qubit bit flip noise}:
\begin{enumerate}[label=(\arabic*)]
    \item Prepare $\ket{\psi}=\frac{1}{\sqrt{2}}\left(\ket{0}^{\otimes (n_q+1)}+\ket{1}^{\otimes (n_q+1)}\right)$.
    \item Apply $\mathcal E_\theta^{\otimes n_q}$ to the first $n_q$ qubits.
    \item For each qubit $i=1,\dots,n_q$, perform the following recovery sequence: apply $U_\mathrm{CNOT}^{\,n_q+1 \to i}$, reset the $i$-th qubit to $\ket{0}$, and then apply $U_\mathrm{CNOT}^{\,n_q+1 \to i}$ again.
    \item Repeat steps (2)--(3) a total of $N$ times.
    \item Measure all the qubits in the $X$ basis.
\end{enumerate}

This example also highlights a distinction between our approach and decoherence-free subspace \cite{Lidar1998PRL} methods for metrology \cite{Layden2018npjqi,Liu2025PRL,Pereira2023SR}. The latter can be viewed as a special case where the signal is protected passively, while active gauge reset is required here to handle more general classes of noise as illustrated by this example.

\emph{Dynamical QEC}---We now extend the discussion to a dynamical QEC setting for the purpose of sensing periodic quantum signals. Floquet codes \cite{Hastings2021Quantum,Davydova2023PRXQuantum,Davydova2024Quantum,TownsendTeague2023FloquetifyingColourCode,Alam2025PRA,Eickbusch2025NP}, as a variant of subsystem codes, feature dynamically generated logical qubits throughout a measurement cycle of noncommuting check operators, which combine across the cycle to give deterministic syndrome outcomes. Here we focus on a specific type of Floquet codes, named the ladder code \cite{Hastings2021Quantum}, to illustrate the principle of Floquet code metrology. We anticipate that the generalization to other classes of Floquet codes is straightforward, whenever the signal can be encoded in a periodic logical Hamiltonian.
The ladder code architecture consists of two horizontal legs connected by vertical rungs, as shown in Fig.~\ref{fig:ladder}. The left and right ends of the ladder are connected to form a torus. Denoting the number of plaquettes by $n_p$ (assumed to be an even number), the total number of physical qubits is $2n_p$. One can perform alternating $XX$ and $YY$ checks on top and bottom horizontal legs and $ZZ$ checks on vertical rungs. It can be verified that the gauge group generated by all check operators is full-rank, meaning no logical information is encoded statically.  

\begin{figure} [!htbp]
    \centering
    \includegraphics[width=0.45\textwidth]{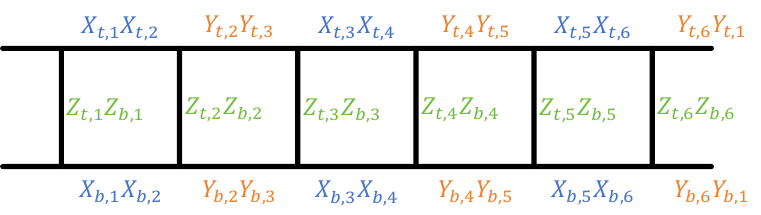}
    \caption{\label{fig:ladder}Ladder code architecture (for $n_p=6$). Qubits lie on the vertices and check operators on two neighboring qubits are measured.}
\end{figure}

A cycle of syndrome measurements consists of four rounds of checks: 
\begin{equation}
    \begin{cases}
        \text{round }0 \pmod 4: ZZ\text{ checks},\\
        \text{round }1 \pmod 4: XX\text{ checks},\\
        \text{round }2 \pmod 4: ZZ\text{ checks},\\
        \text{round }3 \pmod 4: YY\text{ checks}.\\
    \end{cases}
\end{equation}
 
After each round of checks, the measured check operators (up to signs) together with all four-qubit plaquette operators (i.e., the products of checks around each square) generate an instantaneous stabilizer group consisting of $2n_p-1$ independent stabilizer elements, thus inducing one logical qubit. The syndrome bits are eigenvalues of plaquette stabilizers, inferred from pairs of two recent consecutive measurement rounds. These syndrome bits always take the value $+1$ in the absence of noise, while any flips indicate faults and can be used to infer the most likely error configuration.

We can choose an inner logical operator
\begin{equation}
    Z_L=Z_{b,1}Z_{b,2}\cdots Z_{b,n_p}
\end{equation}
which commutes with all check operators throughout the measurement cycle. The dynamical nature of the code is reflected in the outer logical operator on any vertical rung
\begin{equation}
    X_L=\begin{cases}
        \pm X_{t,i}X_{b,i}, \text{ after round 1},\\
        \pm Y_{t,i}Y_{b,i}, \text{ after round 3}, \\
        \pm X_{t,i}X_{b,i}\text{ or }\pm Y_{t,i}Y_{b,i}, \text{ after round 0, 2},
\end{cases}
\end{equation} 
for any $i$. The sign has to be determined by the history of check measurement outcomes. Apparently, $X_L$ anticommutes with $Z_L$ and commutes with all the instantaneous stabilizers. Therefore, $X_L$ and $Z_L$ define a dynamical logical qubit, with the logical subspace tracked through the recorded check measurement outcomes. As noted in Ref.~\cite{Hastings2021Quantum}, the ladder code can correct a sequence of stochastic Pauli errors characterized by Kraus operators proportional to check operators (which we call check operator noise), in a low error rate regime. We defer the detailed discussions of its error correctability to End Matter.

\emph{Floquet codes for time-dependent metrology}---In a time-dependent metrological setting, it is natural to encode the signal of interest in the dynamically evolving logical information. Consider a time-dependent Hamiltonian $H(t)=H^{(k)}$ for each time interval $t\in[k\Delta t,(k+1)\Delta t),\ k=0,\dots,N-1$, which changes alternatively between sums of vertical rung operators multiplied by the parameter of interest $\theta$:
\begin{equation}
    H^{(k)}=\begin{cases}
        \theta\sum_{i=1}^{n_p} X_{t,i}X_{b,i},\text{ for odd $k$},\\
        \theta \sum_{i=1}^{n_p} Y_{t,i}Y_{b,i},\text{ for even $k$}.
    \end{cases}
\end{equation}
We can synchronize the measurement cycle of the ladder code to protect the logical evolution of the state. Specifically, we start with $ZZ$ checks at $t=\Delta t/2$, and perform $ZZ\to XX\to ZZ\to YY$ checks periodically with a time interval $\Delta t/2$ between neighboring rounds.
The errors are assumed to be products of check operators that occur right before each measurement (unlike conventional single-qubit error models) and are sparse enough such that logical errors are highly unlikely. All operations (measurements, state preparation and controls) are assumed to be instantaneous.

We formulate the metrological protocol in Algorithm~\ref{alg:floquet metrology}, and provide more explanations in End Matter. 

\begin{algorithm}[H]
\caption{Floquet code metrology protocol.}
\label{alg:floquet metrology}
\begin{algorithmic}[1]
\State Prepare an initial logical state stabilized by all plaquette stabilizers and by the $+1$ $YY$ checks:
\begin{equation}\label{eq:Floquet code probe}
\ket{\psi}=\frac{1}{\sqrt{2}}\left(\ket{\lambda_{\mathrm{min}}}+\ket{\lambda_{\mathrm{max}}}\right),
\end{equation}
where $\ket{\lambda_{\mathrm{min}}}$ and $\ket{\lambda_{\mathrm{max}}}$ are eigenstates of $H^{(0)}$ with minimal and maximal eigenvalues.
\For{$r=0,\ldots,2N-1$}
    \State At time $t=(r+1)\Delta t/2$, perform round $r$ checks.
    \State Infer the error $E$ from plaquette syndromes obtained by checks of rounds $r$ and $r-1$ (assuming $+1$ $YY$ checks at round $-1$), and apply the correction $E^\dagger$. Decode the error-free check outcomes. 
    \If{$r \equiv 0\pmod 4$}
    \If{a pair of neighboring error-free $Z_{t,i}Z_{b,i}$ and $Z_{t,i+1}Z_{b,i+1}$ checks are both $+1$ for any even $i$}
        \State Apply a $Y_{b,i}Y_{b,i+1}$ unitary control.
    \EndIf
    \ElsIf{$r \equiv 2\pmod 4$}
    \If{a pair of neighboring error-free $Z_{t,j}Z_{b,j}$ and $Z_{t,j+1}Z_{b,j+1}$ checks are both $+1$ for any odd $j$}
        \State Apply an $X_{b,j}X_{b,j+1}$ unitary control.
    \EndIf
    \EndIf
\EndFor
\State Measure in the basis of the logical operator $Z_L$.
\end{algorithmic}
\end{algorithm}

For a low error rate the dynamics can be approximated as an effectively unitary evolution in a periodically changing reference basis, allowing the QFI to achieve the HL $F^Q(\rho_\theta)\propto N^2$.

\emph{Discussion}---We have introduced a framework for quantum metrology based on subsystem stabilizer codes, demonstrating that the HL can be achieved under resource requirements far less stringent than those required by standard QEC-assisted protocols. The core principle is that protecting the full quantum state is unnecessary for metrological purposes, while it suffices to preserve coherent evolution within a logical subsystem. This insight facilitates recovery protocols that avoid explicit syndrome extraction and reduce the ancilla overhead. We have also shown the advantage of  dynamical QEC for time-dependent metrology by performing check measurements that enter the instantaneous stabilizer group in a cycle.

We remark that, in terms of syndrome-free subsystem codes, our approach aligns with the operational philosophy of autonomous error correction for metrology \cite{Kwon2025Restoring,Ni2025Autonomous}, in which engineered dissipation removes the error entropy to a heat bath without explicit feedback measurements. 
However, a crucial distinction is that we do not assume access to reservoir engineering that implicitly performs syndrome monitoring; instead, the gauge subsystem naturally absorbs the error entropy, leaving the logical subsystem coherent and preserving the metrological signal. We anticipate that this architecture offers a promising pathway toward resource-efficient noisy quantum sensors.

\emph{Acknowledgments}---Q.L. and S.Z. acknowledge support from National Research Council of Canada (Grant No.~AQC-217-1) and Perimeter Institute for Theoretical Physics, a research institute supported in part by the Government of Canada through the Department of Innovation, Science and Economic Development Canada and by the Province of Ontario through the Ministry of Colleges and Universities. 

\section*{End Matter}
\emph{Mechanism of the Floquet code metrology protocol}---The principle here is to ensure $H(t)$ coherently accumulates the signal. To this end, by noting that $H(0)=n_p \theta X_{L,-1}$, we prepare a probe state Eq.~(\ref{eq:Floquet code probe}) in the code space to maximize $\mel{\psi}{X_{L,-1}^2}{\psi}-(\mel{\psi}{X_{L,-1}}{\psi})^2$ so that we can obtain the highest QFI \cite{Giovannetti2006PRL}. In all the rounds of check measurements, the transition between instantaneous stabilizer groups should guarantee that $H(t)=n_p \theta X_{L}$ all the time. This is achieved by the following procedure: in each round, we decode the error $E$ by minimum-weight perfect matching \cite{Edmonds1965CJM,Edmonds1965Maximum} from the check measurements and obtain the error-free outcomes. Unlike using the code as a static memory where we do not need real-time correction, here we have to apply explicit control to ensure that $H(t)=n_p \theta X_{L}$. By applying $E^\dagger$, the state is brought back to the error-free code space. By construction, a pair of error-free $ZZ$ checks that form a $+1$ plaquette stabilizer must have the same parity. Then, if all error-free $Z_{t,k}Z_{b,k}$ check outcomes are $-1$, we can freely choose $X_L=X_{t,k}X_{b,k}$ or $X_L=Y_{t,k}Y_{b,k}$ after the check measurements, by noting that $X_{t,k}X_{b,k}Y_{t,k}Y_{b,k}=-Z_{t,k}Z_{b,k}$. In this case, no additional control is required and $H(t)=n_p \theta X_{L}$ is naturally satisfied. However, if a pair of error-free $Z_{t,k}Z_{b,k}$ check outcomes are both $+1$ for any $k$, we need an additional control (as in Algorithm~\ref{alg:floquet metrology}) to rotate the code space to the one associated with $-1$ $ZZ$ checks.

\emph{Error correction of ladder codes}---Here we provide a detailed analysis of the logical state evolution in the ladder code and the effect of errors, and clarify the distinction between QEC for quantum memory and that for quantum metrology.

Let us consider an initial logical state before round $0$ (which is in the instantaneous stabilizer group after round $-1$ when we have all $YY$ checks of value $+1$):
\begin{equation}
    \ket{\psi_L}_{-1}=\alpha\ket{0_L}_{-1}+\beta\ket{1_L}_{-1},
\end{equation}
where $Z_L\ket{0_L}_{-1}=\ket{0_L}_{-1}$, $Z_L\ket{1_L}_{-1}=-\ket{1_L}_{-1}$, and $X_{L,-1}\ket{0_L}_{-1}=\ket{1_L}_{-1}$, having denoted $X_L$ after round $r$ by $X_{L,r}$. At round $0$, we measure $ZZ$ check operators and the postselected state is 
\begin{equation}
    \begin{aligned}
        \ket{\psi_L}_{0}&=\frac{1}{\sqrt{\Tr(\Pi_{\mathbf z}^{ZZ}\ket{\psi_L}_{-1}\bra{\psi_L}_{-1})}}\Pi^{ZZ}_{\mathbf z}\ket{\psi_L}_{-1}\\
        &=2^{\frac{n_p}{4}}\Pi^{ZZ}_{\mathbf z}\ket{\psi_L}_{-1},
    \end{aligned}
\end{equation}
given a string of check measurement outcomes $\mathbf z=(z_1,\dots,z_{n_p})$ obtained from the projection $\Pi^{ZZ}_\mathbf z:=\prod_i\frac{1}{2}(I+(-1)^{z_i}Z_{t,i}Z_{b,i})$. The prefactor stems from the fact that $2^{n_p/2}$ $ZZ$ check outcomes are equally probable, constrained by $+1$ plaquette stabilizers. We can regard this state evolution as a change of frame for the logical subspace, by defining
\begin{equation}
    \ket{0_L}_0=2^{\frac{n_p}{4}}\Pi^{ZZ}_{\mathbf z}\ket{0_L}_{-1}
\end{equation}
and 
\begin{equation}
    \ket{1_L}_0=2^{\frac{n_p}{4}}\Pi^{ZZ}_{\mathbf z}\ket{1_L}_{-1}.
\end{equation}
Similar procedures can be performed in subsequent rounds to obtain $\ket{0_L}_r$ and $\ket{1_L}_r$. Fixing $r$, one can always require the logical operator representatives of $X_L$ after round $r-1$ checks and round $r$ checks to be the same, by adjusting one of them with a $ZZ$ check operator if distinct. This is due to the following observation: $ZZ$ checks are performed at either round $r-1$ or round $r$. If $ZZ$ checks are measured at round $r-1$, after the measurements $X_{L,r-1}$ can be chosen as either $\pm XX$ or $\pm YY$, since they differ by a $ZZ$ check operator up to a sign. By a proper choice of $X_{L,r-1}$, after round $r$ of $XX$ ($YY$) checks $X_{L,r}$ is the same as $X_{L,r-1}$. Similar arguments hold if $ZZ$ checks are performed at round $r$. This ensures $X_{L,r}\ket{0_L}_r=\ket{1_L}_r$. Therefore, the code state after round $r$ checks is preserved if we change the frame consistently:
\begin{equation}
    \ket{\psi_L}_{r}=\alpha\ket{0_L}_r+\beta\ket{1_L}_r.
\end{equation}

When a Pauli error occurs, it flips the outcomes of check operators that anticommute with it. Without loss of generality, let us assume that an $XX$ check operator error $E$ happens before round $0$ measurements. The noisy post-measurement state is
\begin{equation}
    \ket{\psi_L}_0^E=\Pi^{ZZ}_{\mathbf z_E}E\ket{\psi_L}_{-1} = E\Pi^{ZZ}_\mathbf z \ket{\psi_L}_{-1}=E\ket{\psi_L}_0,
\end{equation}
where $\Pi^{ZZ}_{\mathbf z_E}$ denotes the projection on noisy check outcomes $\mathbf z_E$ and $\Pi^{ZZ}_{\mathbf z}$ is the error-free check outcome projection. In traditional QEC for quantum memory, this check operator noise can thus be absorbed into $XX$ checks at later rounds without active decoding \cite{Hastings2021Quantum}, since after $XX$ check measurements associated with projection $\Pi^{XX}_\mathbf x$ at round $1$ the state becomes 
\begin{equation}    
    \begin{aligned}
        \Pi^{XX}_\mathbf x \ket{\psi_L}_0^E&=E\Pi^{XX}_\mathbf x\Pi^{ZZ}_\mathbf z \ket{\psi_L}_{-1}\\
        &\propto\Pi^{XX}_\mathbf x\Pi^{ZZ}_\mathbf z \ket{\psi_L}_{-1}\\
        &=\alpha\ket{0_L}_1+\beta\ket{1_L}_1.
    \end{aligned}
\end{equation} 
Similar arguments hold for $XX$ or $YY$ check operator errors occurring before any check measurements, while $ZZ$ check operator errors are trivial. Moreover, it is unnecessary to distinguish a check operator error on the top leg or the bottom leg, as they differ by a stabilizer during $ZZ$ checks. Therefore, a single check operator error has no long-term effect on the logical state and is automatically corrected for idle quantum memory. However, in metrology the errors may not commute with the sensing Hamiltonian and should be decoded and corrected in time after each round of checks, as in Algorithm~\ref{alg:floquet metrology}.

A product of multiple check operator errors can be harmful if it cannot be decoded correctly. The reason is that multiple $XX$ and $YY$ errors can combine to form a logical error, which can destroy the logical information. Fortunately, under a low error rate, the Floquet code can decode correctly by minimum-weight perfect matching. Assume that the single check operator error rate is $p$, and a logical error can only occur if at least $n_p/4$ check operator errors occur, otherwise we are able to decode perfectly. Therefore, if we perform active QEC reach round, the logical error rate is suppressed to $O(p^{n_p/4})$, decreasing exponentially with an increasing lattice size.

The ladder code can also tolerate low-rate classical measurement errors, as each measurement error flips a pair of syndromes in two consecutive rounds, and an odd number of flipped syndrome bits detect the measurement errors. A consecutive sequence of check operator errors and measurement errors can be decoded in the $(1+1)$-dimensional spacetime as an error chain with two nontrivial syndrome bits at the end. Therefore, a decoder for the two-dimensional toric code \cite{Kitaev2003AOP} can be used to decode the errors with high confidence. A subtle issue arises when decoding measurement errors that occur at the end of the error chain, as the location of the measurement errors becomes ambiguous. In quantum computing, this can be resolved fortunately by choosing a logical operator $X_L$ through majority decoding, as $X_L$ can be chosen at $n_p$ locations. However, the same approach becomes challenging in metrology, where the sensing Hamiltonian playing the role of the logical operator is already given in our setting.

\providecommand{\noopsort}[1]{}\providecommand{\singleletter}[1]{#1}%

\appendix
\onecolumngrid
\section{Formal statements and proofs of Lemma~\ref{lem:effective dynamics informal}} \label{app:lem:effective dynamics informal}
We begin with the effective dynamics using standard stabilizer codes for metrology. Below we show that, for correctable Pauli errors $\{E_i\}$, the state evolution via stabilizer codes is only governed by an effective Hamiltonian $H_{\mathrm{eff}}$ up to the first order of $\theta=\theta_0+d\theta=d\theta$, for $\theta_0=0$. 
\begin{lemma} (Effective dynamics via stabilizer codes.) \label{lem:effective dynamics stabilizer code}
    Consider a stabilizer code with code space projection $\Pi_C$ and a probe state supported on the code space, $\rho=\Pi_C \rho \Pi_C$. If $\{E_i\}_{i=1}^r$ are correctable Pauli errors satisfying $\Pi_CE_i^\dagger E_j \Pi_C \propto \Pi_C$, then there exists a recovery operation $\mathcal R$ such that for $\theta=d\theta$
    \begin{equation}
        \rho_{\theta}=\mathcal R\circ\mathcal E_\theta(\rho)=\rho-\mathrm i[H_{\mathrm{eff}}, \rho]d\theta+O(d\theta^2)
    \end{equation}
    for $H_{\mathrm{eff}}:=\Pi_CH\Pi_C$.
\end{lemma}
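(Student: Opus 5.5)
The plan is to build the recovery $\mathcal R$ explicitly from the Knill--Laflamme structure of the Pauli errors, and then expand $\mathcal R\circ\mathcal E_\theta(\rho)$ to first order in $d\theta$.

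\textbf{Recovery.} Since each $E_i$ is Pauli, $\Pi_C E_i^\dagger E_j\Pi_C\propto\Pi_C$ means one of two things: $E_i^\dagger E_j$ lies in $\mathcal S$ up to phase, or it anticommutes with some stabilizer (so the compression vanishes). I group the errors into classes $[E]$ modulo $\mathcal S$. Distinct classes map $\mathcal H_C$ onto mutually orthogonal syndrome spaces $\Pi_s=E_s\Pi_C E_s^\dagger$; errors in the same class act identically on $\mathcal H_C$ up to a phase. Extending with the remaining syndrome projectors, I define
\begin{equation}
\mathcal R(\cdot)=\sum_s \Pi_C E_s^\dagger\Pi_s(\cdot)\Pi_s E_s\Pi_C .
\end{equation}
Every syndrome is completed, including those not reached by the $E_i$, so that $\sum_s\Pi_s=I$ and $\mathcal R$ is trace preserving.

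\textbf{Expansion.} I write $e^{-\mathrm i\theta H}=I-\mathrm i\,d\theta H+O(d\theta^2)$. Then
\begin{equation}
\mathcal E_\theta(\rho)=\sum_i p_iE_i\rho E_i^\dagger-\mathrm i\,d\theta\sum_i p_i\big(E_iH\rho E_i^\dagger-E_i\rho HE_i^\dagger\big)+O(d\theta^2).
\end{equation}
\begin{itemize}
\item \emph{Zeroth order.} Each $E_i\rho E_i^\dagger$ lies in the syndrome space of its class. Applying $\mathcal R$ returns $\rho$, since phases cancel between ket and bra, and the weights sum to $\sum_ip_i=1$.
\item \emph{First order, $E_iH\rho E_i^\dagger$.} I insert $\rho=\Pi_C\rho\Pi_C$ and apply $\mathcal R$. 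The bra side $\Pi_C E_i^\dagger$ selects only the syndrome $s$ of $E_i$. The ket side then becomes $\Pi_C E_s^\dagger\Pi_s E_iH\Pi_C\rho\,\Pi_C$. Using $\Pi_s=E_s\Pi_C E_s^\dagger$ and $E_s^\dagger E_i\Pi_C=c\,\Pi_C$, this reduces to $c\,\Pi_C H\Pi_C\rho$. The phase $c$ cancels against $\bar c$ from the bra, giving $\Pi_CH\Pi_C\rho$. By the symmetric argument, the other term gives $\rho\,\Pi_CH\Pi_C$.
\end{itemize}
Summing with weights $p_i$ yields $\rho-\mathrm i\,d\theta[\Pi_CH\Pi_C,\rho]+O(d\theta^2)$.

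\textbf{Main obstacle.} The delicate step is the first-order term. There, $H$ can take the state out of $\mathcal H_C$, so $E_iH\Pi_C$ has components in other syndrome spaces $\Pi_{s'}$. I must check that these components contribute nothing. They vanish because the bra side of $\mathcal R$ forces the same syndrome projector on both sides. Specifically, $\Pi_{s'}(E_iH\rho E_i^\dagger)\Pi_{s'}$ contains $\Pi_{s'}E_i\Pi_C$, which is zero for $s'\neq s$ by orthogonality of the syndrome spaces. Only the compression $\Pi_s E_i H\Pi_C$ survives, and that equals $E_i\Pi_CH\Pi_C$.

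A second point to handle carefully is that the orthogonality of distinct classes follows from the Pauli structure: anticommutation with a stabilizer gives $\Pi_C E_i^\dagger E_j\Pi_C=0$. I would state this explicitly so that the sum over $s$ is well defined and the phases are consistent.
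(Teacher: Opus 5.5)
Your proposal is correct and follows essentially the same route as the paper: recovery Kraus operators $E_s^\dagger\Pi_s$ built from syndrome projectors $\Pi_s=E_s\Pi_C E_s^\dagger$, a first-order expansion in $d\theta$, and the observation that the leakage part $\Pi_E H\Pi_C\rho$ is annihilated because $\Pi_s E_i\Pi_E=0$ for the syndrome $s$ of $E_i$. Your completion of the unreached syndromes, which makes $\mathcal R$ trace preserving, and your explicit tracking of the weights $p_i$ are small points of care that the paper's write-up glosses over.
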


\begin{proof}
    For a stabilizer code, each error $E_i$ maps the code space to either identical or orthogonal subspaces. Denoting the syndrome of error $E_i$ by $\beta(i)$, we have
    \begin{equation}
        \Pi_{\beta(i)}\Pi_{\beta(j)}=\delta_{\beta(i)\beta(j)}\Pi_{\beta(i)}
    \end{equation}
    for projection $\Pi_{\beta(i)}= E_i\Pi_CE_i^\dagger$. The Kraus operators for the recovery operation $\mathcal R$ can be chosen as
    \begin{equation}
        R_m=U_m\Pi_m,\ m\in\{\beta(i)\mid i=1,\dots,r\}
    \end{equation}
    with unitary $U_m= E_j^\dagger$ for some $j$ such that $\beta(j)=m$. Such a choice ensures that $U_m \Pi_m E_i\Pi_C =c_{i}\delta_{m\beta(i)} \Pi_C$ for some scalar $c_i$. We denote the projection on the leakage error subspace by $\Pi_E:=I-\Pi_C$. After the recovery map the state becomes
    \begin{equation}
        \begin{aligned}
            \rho_\theta&=\rho-\mathrm i\sum_{im}R_mE_i[H,\rho]E_i^\dagger R_m^\dagger d\theta+O(d\theta^2)\\
            &=\rho-\mathrm i\sum_{im}R_mE_i[(\Pi_C+\Pi_E)H(\Pi_C+\Pi_E),\rho]E_i^\dagger R_m^\dagger d\theta+O(d\theta^2)\\
            &=\rho-\mathrm i[\Pi_CH\Pi_C,\rho]d\theta-\mathrm i \sum_{im}R_mE_i(\Pi_EH\rho-\rho H\Pi_E)E_i^\dagger R_m^\dagger d\theta+O(d\theta^2)\\
            &=\rho-\mathrm i[\Pi_CH\Pi_C,\rho]d\theta+O(d\theta^2),
        \end{aligned}
    \end{equation}
    where in the last equality we have used $\Pi_{\beta(i)}E_i\Pi_E=0$.
\end{proof}

For subsystem stabilizer codes, we similarly show that the effective evolution remains unitary within the code space $\mathcal H_C$ by resetting the gauge state. 

\begin{lemma} (Effective dynamics via subsystem stabilizer codes.)\label{lem:subsystem code dynamics}
    Consider a subsystem stabilizer code space $\mathcal H_C=\mathcal H_L \otimes \mathcal H_G$ associated with the projection $\Pi_C$. We take a probe state $\rho=\rho_L\otimes\rho_G$, where $\rho_L$ is supported on $\mathcal H_L$ and $\rho_G$ is an initial gauge state. If $\{E_i\}$ are correctable Pauli errors satisfying $\Pi_C E_i^\dagger E_j \Pi_C = I_L \otimes g^{ij}_G$, then there exists a recovery operation $\mathcal R$ such that for a small $\theta=d\theta$
    \begin{equation} \label{eq:subsystem code dynamics}
        \rho_{\theta}=\mathcal R\circ\mathcal E_\theta(\rho)=\left\{\rho_L-\mathrm i[H_{\mathrm{eff}}, \rho_L]d\theta+O(d\theta^2) \right\} \otimes \rho_G
    \end{equation}
    for $H_{\mathrm{eff}}:=\sum_k \langle H_{G,k} \rangle_{\rho_G} H_{L,k}$, where we have decomposed $\Pi_CH\Pi_C=\sum_kH_{L,k}\otimes H_{G,k}$ and used the notation $\langle H\rangle_{\rho}:=\Tr(H\rho)$. The recovery operation $\mathcal R$ includes resetting the gauge state to $\rho_G$.
\end{lemma}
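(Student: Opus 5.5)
The plan is to mirror the proof of Lemma~\ref{lem:effective dynamics stabilizer code}, with an added gauge reset. First I fix a Clifford frame in which $\mathcal S=\langle Z_1',\dots,Z_s'\rangle$, the gauge qubits are $s+1,\dots,s+g$, and the logical qubits are $s+g+1,\dots,n$. In this frame every Pauli factorizes as (syndrome part)$\otimes$(logical part)$\otimes$(gauge part).

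The correctability condition $\Pi_C E_i^\dagger E_j\Pi_C=I_L\otimes g^{ij}_G$ means the following for Pauli errors. If $E_i$ and $E_j$ have the same syndrome $\beta$, then $E_i^\dagger E_j\in\mathcal N(\mathcal S)$ and hence lies in $\mathcal G$. So $E_i$ and $E_j$ differ on the code space only by a gauge operator.

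Next I would construct the recovery $\mathcal R$ in three stages.
\begin{enumerate}[label=(\arabic*)]
\item Project onto the syndrome subspaces $\Pi_\beta=E_{j(\beta)}\Pi_C E_{j(\beta)}^\dagger$.
\item Apply $U_\beta=E_{j(\beta)}^\dagger$.
\item Apply the reset channel $\mathcal T(\sigma)=\mathrm{Tr}_G(\Pi_C\sigma\Pi_C)\otimes\rho_G$, completed by any trace-preserving map on $\mathcal H_C^\perp$ that outputs $\rho_G$ tensored with an arbitrary logical state. The completion is irrelevant because nothing reaches $\mathcal H_C^\perp$ at first order.
\end{enumerate}
For each $i$ with $\beta(i)=\beta$ we get $U_\beta\Pi_\beta E_i\Pi_C=\Pi_C(I_L\otimes G_i)\Pi_C$ for some gauge Pauli $G_i$ (up to phase). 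For $\beta\neq\beta(i)$ the product vanishes.

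Then I expand $\mathcal E_\theta(\rho)=\sum_i p_iE_i(\rho-\mathrm i d\theta[H,\rho])E_i^\dagger+O(d\theta^2)$ and insert $H=(\Pi_C+\Pi_E)H(\Pi_C+\Pi_E)$, exactly as in Lemma~\ref{lem:effective dynamics stabilizer code}. The cross terms $\Pi_EH\rho$ are annihilated because $\Pi_{\beta(i)}E_i\Pi_E=0$. What remains has the form
\begin{equation}
\rho_\theta=\sum_ip_i\,\mathcal T\!\left((I_L\otimes G_i)\big(\rho-\mathrm i d\theta[\Pi_CH\Pi_C,\rho]\big)(I_L\otimes G_i)^\dagger\right)+O(d\theta^2).
\end{equation}

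The key step, and what I expect to be the only real subtlety, is evaluating the partial trace. Write $\Pi_CH\Pi_C=\sum_kH_{L,k}\otimes H_{G,k}$ and $\rho=\rho_L\otimes\rho_G$. The unitary $G_i$ acts only on $G$ and disappears under $\mathrm{Tr}_G$ by cyclicity. This gives
\begin{equation}
\mathrm{Tr}_G\big[(H_{L,k}\otimes H_{G,k})(\rho_L\otimes\rho_G)\big]=\langle H_{G,k}\rangle_{\rho_G}H_{L,k}\rho_L,
\end{equation}
and similarly for the right multiplication. The zeroth-order term returns $\rho_L$, using $\sum_ip_i=1$. The first-order term becomes $-\mathrm i d\theta[H_{\mathrm{eff}},\rho_L]$ with $H_{\mathrm{eff}}=\sum_k\langle H_{G,k}\rangle_{\rho_G}H_{L,k}$. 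Re-tensoring with $\rho_G$ gives Eq.~(\ref{eq:subsystem code dynamics}).

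The remaining care points are the following. The phases in $E_i^\dagger E_j$ cancel between ket and bra. The recovery must be a genuine CPTP map, which I check by verifying that the Kraus operators $\{U_\beta\Pi_\beta\}\cup\{\Pi_{\perp}\}$ sum to the identity before the reset. Hermiticity of $H_{\mathrm{eff}}$ follows from choosing the $H_{G,k}$ Hermitian, for example by expanding in Pauli products.
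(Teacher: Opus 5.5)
Your proposal is correct and follows essentially the same route as the paper. It uses syndrome projection and correction by a representative $E_{j(\beta)}^\dagger$, which leaves only a gauge operator on the code space, followed by the gauge reset $\mathrm{Tr}_G(\cdot)\otimes\rho_G$; the leakage cross terms vanish via $\Pi_{\beta(i)}E_i\Pi_E=0$, and the partial trace then yields $H_{\mathrm{eff}}$. Your extra checks fill in details the paper leaves implicit: why same-syndrome errors differ by an element of $\mathcal G$, the CPTP completion of the reset on $\mathcal H_C^\perp$, and that the unitary $G_i$ drops out of $\mathrm{Tr}_G$.
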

\begin{proof}
    As in the proof of Lemma~\ref{lem:effective dynamics stabilizer code}, we can denote syndrome projections by $\Pi_{\beta(i)} = E_i\Pi_C E_i^\dagger$. The recovery operation $\mathcal R$ can be chosen as a composition of two processes $\mathcal R=\mathcal R^{(2)} \circ \mathcal R^{(1)}$. $\mathcal R^{(1)}$ plays the standard role of projection and correction and is represented by Kraus operators $R_m^{(1)}=U_m\Pi_m$ such that
    \begin{equation}
        R_m^{(1)}E_i\Pi_C=U_m\Pi_mE_i\Pi_C=\delta_{m\beta(i)}(I_L \otimes u_{G,i})\Pi_C
    \end{equation}
    for an arbitrary gauge operator $u_{G,i}$ satisfying $\sum_i u_{G,i}^\dagger u_{G,i}=I_G$. $U_m^\dagger= E_j$ for some $j$ such that $\beta(j)=\beta(i)$, which is equivalent to $E_i$ up to a gauge transform. Additionally, $\mathcal R^{(2)}(\cdot)=\Tr_G(\cdot)\otimes \rho_G$ is a reset operation that resets the state of the gauge subsystem. 
    
    We decompose the Hamiltonian projected on the code space as $\Pi_CH\Pi_C= \sum_k H_{L,k}\otimes H_{G,k}$. By denoting the Kraus operators of $\mathcal R$ by $\{R_m\}_m$, the recovered logical state $\rho^\theta_L=\Tr_G[\mathcal R\circ\mathcal E_\theta(\rho_L\otimes \rho_G)]$ is
    \begin{equation}
        \begin{aligned}
            \rho^\theta_L&=\rho_L-\mathrm i\sum_{im}\Tr_G\left(R_mE_i[H,\rho_L\otimes \rho_G]E_i^\dagger R_m^\dagger \right)d\theta+O(d\theta^2)\\
            &= \rho_L - \mathrm{i} \left[\sum_k \langle H_{G,k} \rangle_{\rho_G} H_{L,k}, \rho_L\right] d\theta +O(d\theta^2),\\
        \end{aligned}
    \end{equation}
    by noting that $\Pi_{\beta(i)}E_i\Pi_E=0$. Since the gauge subsystem is reset to the state $\rho_G$ by construction, we obtain the desired result. Note that the reset operation is typically necessary if we would like to apply the QEC procedure $N$ times to obtain a coherent logical signal accumulation $\rho_{L}^{\theta,(N)}:=\Tr_G\left[(\mathcal R\circ\mathcal E_\theta)^N(\rho)\right]=\rho_L - \mathrm{i} \left[NH_{\mathrm{eff}}, \rho_L\right] d\theta +O(d\theta^2)$.
\end{proof}

Note that in Lemma~\ref{lem:subsystem code dynamics} we have implicitly chosen a basis that admits the tensor product structure of logical and gauge subsystem. This basis is equivalent to the computational basis up to a Clifford unitary transformation that needs to be identified by the gauge group.

We furthermore remark that the results of Lemma~\ref{lem:effective dynamics stabilizer code} and \ref{lem:subsystem code dynamics} can be generalized to general QEC codes beyond the stabilizer formalism. This can be seen by choosing a proper Kraus representation for noise operators such that correctable errors produce orthogonal error subspaces \cite{Knill1997PRA}, and the same line of reasoning in the proofs above still holds.

Finally, as in subsystem codes for quantum computation, the syndrome extraction can also be obtained via check measurements for our metrological purpose.
\begin{lemma} (Validity of check measurements.) \label{lem:subsystem code dynamics measuring gauge}
    Under the assumptions of Lemma~\ref{lem:subsystem code dynamics}, the syndrome detection in recovery operation can be constructed by measuring check operators, while still resulting in the same unitary dynamics Eq.~\eqref{eq:subsystem code dynamics}.
\end{lemma}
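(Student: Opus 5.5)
The plan is to replace the syndrome projections $\Pi_m$ in $\mathcal R^{(1)}$ of Lemma~\ref{lem:subsystem code dynamics} by projections obtained from measuring a generating set of check operators, and then show that the extra information and disturbance this produces acts only on the gauge subsystem or on the leakage space, so the reset $\mathcal R^{(2)}$ erases it. We work in the Clifford-transformed basis in which $\mathcal S=\langle Z_1',\dots,Z_s'\rangle$, the gauge qubits carry $Z_{s+1}',X_{s+1}',\dots,Z_{s+g}',X_{s+g}'$, and the logical qubits carry the remaining pairs.

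\textbf{Step 1 (check outcomes determine the syndrome).} Measuring check operators $G_1,\dots,G_t\in\mathcal G$, sequentially or in a fixed order, yields outcomes whose products reproduce the eigenvalues of every stabilizer generator. The reason is that each $S\in\mathcal S$ is a product of checks and commutes with all of them, so the inferred value is deterministic given the error's syndrome $\beta(i)$. Consequently the measurement instrument $\{M_{\mathbf c}\}$, where $M_{\mathbf c}$ is the product of the check projectors in measurement order, refines $\{\Pi_m\}$: for each outcome string $\mathbf c$ there is a unique syndrome $m(\mathbf c)$ with $M_{\mathbf c}=M_{\mathbf c}\Pi_{m(\mathbf c)}$.

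\textbf{Step 2 (the disturbance stays on gauge and leakage).} Each check projector $\tfrac12(I\pm G)$ with $G\in\mathcal G$ commutes with $\Pi_C$ and with every $\Pi_m$, since $E_i^\dagger G E_i=\pm G$. Restricted to $\mathcal H_C$ or to $E_i\mathcal H_C$, it acts as $I_L\otimes(\cdot)_G$ after conjugating by $E_i$. Hence $M_{\mathbf c}E_i\Pi_C=E_i(I_L\otimes w^{\mathbf c}_{G,i})\Pi_C$ for some gauge operators $w^{\mathbf c}_{G,i}$ with $\sum_{\mathbf c}w^{\mathbf c\dagger}_{G,i}w^{\mathbf c}_{G,i}=I_G$. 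Define $R^{(1)}_{\mathbf c}=U_{m(\mathbf c)}M_{\mathbf c}$. Using Step 1 together with $U_m\Pi_mE_i\Pi_C=\delta_{m\beta(i)}(I_L\otimes u_{G,i})\Pi_C$ from Lemma~\ref{lem:subsystem code dynamics}, we obtain
\begin{equation}
R^{(1)}_{\mathbf c}E_i\Pi_C=\delta_{m(\mathbf c)\beta(i)}(I_L\otimes u'_{G,i,\mathbf c})\Pi_C .
\end{equation}
Here the $u'$ still form a gauge-only Kraus family, which makes the new instrument trace preserving on the relevant subspace.

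\textbf{Step 3 (first-order expansion).} We repeat the computation of Lemma~\ref{lem:subsystem code dynamics}. The zeroth-order term gives $\Tr_G$ of a gauge-only channel applied to $\rho_L\otimes\rho_G$, which equals $\rho_L$. In the first-order term, components $\Pi_EH\rho$ vanish because $R^{(1)}_{\mathbf c}E_i\Pi_E=U M_{\mathbf c}\Pi_{\beta(i)}E_i\Pi_E=0$, using that $M_{\mathbf c}$ commutes with $\Pi_{\beta(i)}$. The remaining component is $\Tr_G$ of a gauge-only channel applied to $[\Pi_CH\Pi_C,\rho_L\otimes\rho_G]$, which yields $[H_{\rm eff},\rho_L]$ with the same $\langle H_{G,k}\rangle_{\rho_G}$. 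Indeed, a gauge-only CPTP map acting on $\sum_k[H_{L,k},\rho_L]\otimes H_{G,k}\rho_G$ and its counterpart is traced out by the trace preservation of the gauge channel. Finally $\mathcal R^{(2)}$ resets the gauge to $\rho_G$, which reproduces Eq.~\eqref{eq:subsystem code dynamics}.

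The main obstacle is Step 2 when the measured checks do not commute, as with sequential measurement of noncommuting gauge operators. Here one must verify that the product $M_{\mathbf c}$ still factorizes as $I_L\otimes(\cdot)_G$ on each error sector, and that the outcome statistics are independent of the logical state. I would handle this by applying the Clifford frame change that writes each check as a Pauli on the gauge and stabilizer qubits only, tensored with a sign fixed by the error sector, which makes the factorization manifest.
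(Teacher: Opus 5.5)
Your proposal follows essentially the same route as the paper: check-projector Kraus operators followed by the syndrome-conditioned correction, factorization as $I_L\otimes(\text{gauge})$ on each $E_i\mathcal H_C$, vanishing of the $E_i\Pi_E$ terms in the matching sector, then the first-order expansion of Lemma~\ref{lem:subsystem code dynamics} with gauge reset; your Step~2 actually supplies the justification the paper compresses into ``check measurements do not distort the logical subsystem,'' and it already covers noncommuting checks, so the factorization you flag as the main obstacle is not one. The one real caveat is Step~1: for sequentially measured noncommuting checks the product of outcomes need not equal the stabilizer eigenvalue under an arbitrary order (a check that anticommutes with individual factors of $S$ and is measured between them randomizes the product), so you should restrict to a schedule that determines the syndrome, which is exactly what the paper assumes implicitly through its map $\mathsf s(m_1,\dots,m_{s+2g})$.
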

\begin{proof}
    The syndrome detection and correction operation $\mathcal R^{(1)}$ in the proof of Lemma~\ref{lem:subsystem code dynamics} is replaced by $\mathcal R^{(1)\prime}$ represented by Kraus operators
    \begin{equation}
        R^{(1)\prime}_{m_1,\dots,m_{s+2g}}=U'_m\Pi'_{m_{s+2g}}\cdots \Pi'_{m_1}
    \end{equation}
    that sequentially projects onto the subspaces associated with check measurement outcomes and subsequently applies a recovery unitary $U'_m$ corresponding to the syndrome outcome. Specifically, $m=\mathsf s(m_1,\dots,m_{s+2g})$ is the syndrome determined by the sequence of check measurement outcomes $m_1,\dots,m_{s+2g}$. Since check measurements do not distort the logical subsystem, we have
    \begin{equation}
        U_{\mathsf s(m_1,\dots,m_{s+2g})}'\Pi'_{m_{s+2g}}\cdots \Pi'_{m_1}E_i\Pi_C=\delta_{\mathsf s(m_1,\dots,m_{s+2g})\beta(i)}\left[I_L\otimes (g'_G)^{m_1,\dots,m_{s+2g}}\right]\Pi_C
    \end{equation}
    for some check operator $(g'_G)^{m_1,\dots,m_{s+2g}}$, and $\beta(i)$ denotes the syndrome of error $E_i$. Note that
    \begin{equation}
        \Pi'_{m_{s+2g}}\cdots \Pi'_{m_1}E_i\Pi_E=0,\ \text{for }\mathsf s(m_1,\dots,m_{s+2g})=\beta(i).
    \end{equation}
    We still take $\mathcal R^{(2)}(\cdot)=\Tr_G(\cdot)\otimes \rho_G$ as the reset operation and $\mathcal R=\mathcal R^{(2)} \circ \mathcal R^{(1)}$. Therefore, representing $\mathcal R$ by Kraus operators $\{R_m\}_m$, we obtain the evolved logical state $\rho^\theta_L=\Tr_G[\mathcal R\circ\mathcal E_\theta(\rho_L\otimes \rho_G)]$:
    \begin{equation}
        \begin{aligned}
            \rho^\theta_L&=\rho_L-\mathrm i\sum_{im}\Tr_G\left(R_mE_i[H,\rho_L\otimes \rho_G]E_i^\dagger R_m^\dagger \right)d\theta+O(d\theta^2)\\
            &= \rho_L - \mathrm{i} \left[\sum_k \langle H_{G,k} \rangle_{\rho_G} H_{L,k}, \rho_L\right] d\theta +O(d\theta^2).\\
        \end{aligned}
    \end{equation}
\end{proof}

\section{Proof of Theorem~\ref{thm:subsystem code HL condition}}
\begin{proof}
    Condition~(i) is exactly the criterion for correctable errors \cite{Poulin2005PRL}. Lemma~\ref{lem:subsystem code dynamics} shows that we can choose a code state in the tensor product form $\rho=\rho_L\otimes \rho_G$ and a recovery operation $\mathcal R$ such that 
    \begin{equation}
        \dot \rho_\theta=\mathcal R\circ\dot{\mathcal E}_\theta(\rho)=-\mathrm i[H_{\mathrm{eff}}\otimes I_G,\rho_L\otimes\rho_G].
    \end{equation}
    Here $\dot{\ast}$ means the derivative of $\ast$ with respect to $\theta$. After $N$ repeated applications of the parameter encoding and recovery process, the derivative of the final output state $\rho_\theta^{(N)}=\left(\mathcal R\circ\mathcal E_\theta\right)^N(\rho)$ is
    \begin{equation}
        \dot \rho_\theta^{(N)}=\frac{d}{d\theta}\left(\mathcal R\circ\mathcal E_\theta\right)^N(\rho)=-\mathrm i[NH_{\mathrm{eff}}\otimes I_G,\rho_L\otimes\rho_G].
    \end{equation}
    
    Condition~(ii) guarantees that $H$ has a component playing the role of a nontrivial logical operator. In particular, for $Q_{k_0}\in\mathcal N(\mathcal S)-\mathcal G$, by a Clifford transformation and projection on the code space,
    \begin{equation}
        \Pi_C Q_{k_0} \Pi_C \cong   H'_{L,k_0} \otimes H'_{G,k_0},
    \end{equation}
    where the logical component $H'_{L,k_0} \not\propto I$. This implies
    \begin{equation}
        \sum_{j:Q_j Q_{k_0} \in\mathcal S} \lambda_j \Pi_C Q_j \Pi_C \cong \sum_{j:Q_j Q_{k_0} \in\mathcal S} \lambda _j  H'_{L,k_0} \otimes H'_{G,k_0},
    \end{equation}
    which is nontrivial if $\sum_{j:Q_j Q_{k_0} \in\mathcal S} \lambda _j \neq 0$. Taking into consideration the gauge freedom, all the terms associated with the same logical $H'_{L,k_0}$ amount to
    \begin{equation}
        \sum_{j:Q_j Q_{k_0} \in\mathcal G} \lambda_j \Pi_C Q_j \Pi_C \cong \sum_{j:Q_j Q_{k_0} \in\mathcal G} \lambda _j  H'_{L,k_0} \otimes H'_{G,j},
    \end{equation}
    To obtain a nontrivial effective Hamiltonian $H_{\mathrm{eff}}\not\propto I$, we would like to maximize the absolute value of $\sum_{j:Q_jQ_{k_0}\in\mathcal G}\lambda_j\langle H'_{G,j} \rangle_{\rho_G}$. Therefore, we can choose the gauge state as $\rho_G=\dyad{\psi_G}$ for $\ket{\psi_G}$ to be an eigenvector of $\sum_{j:Q_j Q_{k_0} \in\mathcal G}\lambda_j H'_{G,j}$ corresponding to the largest absolute eigenvalue. Note that while we only consider a single nontrivial logical component $H'_{L,k_0}$, the effective Hamiltonian $H_{\mathrm{eff}}$ remains nontrivial even if there exists another $Q_{k_1}$ with a different logical component $H'_{L,k_1}$.
    
    We can choose the logical input state $\rho_L=\dyad{\psi_L}$ for 
    \begin{equation}
        \ket{\psi_L}=\frac{1}{\sqrt{2}}(\ket{h_{\mathrm{min}}}+\ket{h_{\mathrm{max}}}),
    \end{equation}
    where $\ket{h_{\mathrm{min}}}$ and $\ket{h_{\mathrm{max}}}$ are eigenvectors of $H'_{L,k_0}$ with minimal and maximal eigenvalues, respectively. The (pure state) QFI \cite{Liu2020JPA,Sidhu2020AVSQS} of the resulting output state $\rho_\theta^{(N)}=\left(\mathcal R\circ\mathcal E_\theta\right)^N(\rho)$ is 
    \begin{equation}
        F^Q(\rho_\theta^{(N)}) = 4 N^2 \left(\mel{\psi_L}{H_{\mathrm{eff}}^2}{\psi_L} - \mel{\psi_L}{H_{\mathrm{eff}}}{\psi_L}^2 \right),
    \end{equation}
    which clearly implies the HL.
\end{proof}

\section{Proof of Theorem~\ref{thm:HL syndrome free}}
Before proving the theorem we introduce some preliminaries and peripheral results.
\subsection{Symplectic representation of Pauli operators}\label{app:symplectic representation}
Every element of the $n$-qubit Pauli group $\mathcal P_n$ can be written as
\begin{equation}
    P = \omega X^{\mathbf x} Z^{\mathbf z}
\end{equation}
for $\omega \in \{\pm 1, \pm i\}$, where $\mathbf x, \mathbf z \in \mathbb Z_2^n=\{0,1\}^n$, and we have used the notation $X^{\mathbf x} = X_1^{x_1}\cdots X_n^{x_n}$ and $Z^{\mathbf z} = Z_1^{z_1}\cdots Z_n^{z_n}$. By ignoring the phase $\omega$, we uniquely represent the $n$-qubit Pauli operator $P$ by a binary row vector \cite{Nielsen2010Quantum}
\begin{equation}
    v(P)=(\mathbf x \mid \mathbf z) \in \mathbb Z_2^{2n}.
\end{equation}
Considering $v(P_1)=(\mathbf x_1 \mid \mathbf z_1)$ and $v(P_2)=(\mathbf x_2 \mid \mathbf z_2)$, the multiplication $P_1P_2$ is translated into the binary sum of vectors (addition modulo $2$)
\begin{equation}
    v(P_1)+v(P_2)=(\mathbf x_1+\mathbf x_2 \mid \mathbf z_1+\mathbf z_2).
\end{equation}
The commutation relation is characterized by a symplectic inner product: $[P_1,P_2]=0$ if and only if \begin{equation}
    \langle v(P_1),v(P_2)\rangle_s:=v(P_1)\Lambda [v(P_2)]^T=\mathbf{x_1}\cdot\mathbf z_2+\mathbf{x_2}\cdot\mathbf z_1=0
\end{equation}
for a $2n\times 2n$ matrix
\begin{equation}
    \Lambda=\mqty(0 & I_{n\times n} \\ I_{n\times n} & 0).
\end{equation}
Similarly, $\{P_1,P_2\}=0$ if and only if $\langle v(P_1),v(P_2)\rangle_s=1$.

\subsection{Symplectic Gram-Schmidt
orthogonalization procedure} \label{sec:SGSOP}
The goal of the symplectic Gram--Schmidt orthogonalization procedure \cite{Wilde2009PRA} is to transform a set of Pauli operators $\{g_1,\dots,g_m\}$ into a canonical form $\{Z'_1,X'_1,\dots,Z_\alpha',X'_\alpha,S_1,\dots,S_\beta\}$. Here each $S_i$ for $i=1,\dots,\beta$ commutes with all the Pauli operators in the whole set; for $j=1,\dots,\beta$ each pair $(Z_j',X_j')$ anticommutes, while operators belonging to different pairs commute. 

Translated into the symplectic representation, one can apply a recursive procedure to find such a transformation. We denote by $V$ the space generated by vectors $v(g_1),\dots,v(g_m)$. Let us start with the vector $v(g_1)$. If there exists a vector $v(g_j)$ for $2\le j\le m$ such that $\langle v(g_1),v(g_j)\rangle_s=1$,
then $(v(g_1),v(g_j))$ forms a symplectic pair and can be relabeled as $(v(Z_1'),v(X_1'))$, which is extracted into a buffer. Every remaining vector $v\in V$ is replaced by
\begin{equation}
    v+\langle v,v(Z_1')\rangle_s v(X'_1)
+\langle v,v(X'_1)\rangle_s v(Z'_1),
\end{equation}
which is symplectically orthogonal to both $v(Z_1')$ and $v(X_1')$. If no such $v(g_j)$ exists, then $v(g_1)$ is labeled as the vector $v(S_1)$ and extracted into the processed buffer. Iterating this process on the remaining vectors yields the desired canonical form above.

\subsection{Existence of Clifford transformation} \label{app:lem:clifford existence}
\begin{lemma} \label{lem:clifford existence}
    Let $\{P_1,\dots,P_J\} \in \mathcal P_n$ and $\{Q_1,\dots,Q_J\} \in \mathcal P_n$ be two sets of multi-qubit Pauli operators. If they have the same commutation relations and multiplication relations, i.e.,
    \begin{equation}
        [P_i,P_j]=0 \iff [
    Q_i,Q_j]=0,\ \forall i,j=1,\dots,J
    \end{equation}
    and
    \begin{equation}
        P_iP_j=\omega P_k \iff Q_iQ_j=\omega Q_k,\ \omega\in\{\pm 1,\pm \mathrm i\},\ \forall i,j=1,\dots,J,
    \end{equation}
    then there exists a Clifford unitary $U$ such that $U P_i U^\dagger=Q_i,\ \forall i=1,\dots,J$.
\end{lemma}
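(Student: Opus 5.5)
The plan is to reduce the statement to linear algebra over $\mathbb Z_2^{2n}$ using the symplectic representation of Appendix~\ref{app:symplectic representation}, build a symplectic map there, and then fix the phases with a Pauli correction. We read the multiplication hypothesis as saying that every product relation among the $P_i$, including the phase, holds for the $Q_i$ and conversely. Under that reading the assignment $v(P_i)\mapsto v(Q_i)$ extends to a well-defined linear bijection $\phi$ between $V_P=\mathrm{span}\{v(P_i)\}$ and $V_Q=\mathrm{span}\{v(Q_i)\}$. The commutation hypothesis says $\phi$ preserves $\langle\cdot,\cdot\rangle_s$. Concretely, we choose a subset of indices whose vectors form a basis of $V_P$. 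The corresponding $v(Q_i)$ are then a basis of $V_Q$, and every other $P_j$ is a product of basis elements, with phase, exactly as $Q_j$ is.

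Next we run the symplectic Gram--Schmidt procedure of Sec.~\ref{sec:SGSOP} on the basis of $V_P$ and on the corresponding basis of $V_Q$ \emph{in parallel}. Every step of the procedure depends only on symplectic inner products and on additions of vectors, and $\phi$ preserves both. So the two runs make identical choices. They produce canonical forms $\{Z'_j,X'_j\}_{j\le\alpha}\cup\{S_l\}_{l\le\beta}$ and $\{\tilde Z'_j,\tilde X'_j\}\cup\{\tilde S_l\}$ with the same $\alpha,\beta$, related by $\phi$. We then extend each to a full symplectic basis of $\mathbb Z_2^{2n}$. Each isotropic $S_l$ gets a symplectic partner chosen in the symplectic complement of the pairs already fixed, and the remaining $n-\alpha-\beta$ pairs are completed arbitrarily. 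This is Witt's extension theorem, which can also be proved by continuing the Gram--Schmidt recursion on a complement. The linear map sending one full symplectic basis to the other is a symplectic matrix $M$ extending $\phi$. Since the Clifford group surjects onto $\mathrm{Sp}(2n,\mathbb Z_2)$, there is a Clifford $U_0$ with $U_0P_iU_0^\dagger=\pm Q_i$, or more generally equal to $Q_i$ up to a phase, for every $i$.

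The remaining work is the phases, and we expect this to be the main obstacle. Conjugation preserves the spectrum, and hence the Hermiticity class, of each operator. Also, the multiplication hypothesis applied to $P_i^2$ forces the phases of $P_i$ and $Q_i$ to be compatible. Hence on the independent basis elements, $U_0P_iU_0^\dagger=\epsilon_iQ_i$ with $\epsilon_i=\pm1$. Because these basis vectors are linearly independent, there is a Pauli $R$ whose symplectic products with them take the prescribed values: $R$ anticommutes with $U_0P_iU_0^\dagger$ exactly when $\epsilon_i=-1$. Setting $U=RU_0$ fixes all signs on the basis. For a dependent index $k$, the hypothesis $P_iP_j=\omega P_k\iff Q_iQ_j=\omega Q_k$, iterated along the expression of $P_k$ in the basis, then gives $UP_kU^\dagger=Q_k$. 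The delicate point is making sure the stated pairwise multiplication relations really encode all dependencies with their phases. If they do not, we would state explicitly that the hypothesis is meant as equality of all product relations within the generated groups, which is the form actually used in the proof of Theorem~\ref{thm:HL syndrome free}.
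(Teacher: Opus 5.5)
Your proof is correct and follows the same skeleton as the paper. Both pass to $\mathbb Z_2^{2n}$, extend the isometry $V_P\to V_Q$ by Witt's theorem (you make this constructive by running symplectic Gram--Schmidt on both sides in parallel), and then realize the extension by a Clifford unitary.

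The difference is in that last step. You cite surjectivity of the Clifford group onto $\mathrm{Sp}(2n,\mathbb Z_2)$. The paper instead lifts the symplectic map to a Pauli-group automorphism $a$ with $a(P_i)=Q_i$ and builds $U$ by hand. It takes $U|0\rangle^{\otimes n}$ to be the stabilizer state of $a(Z_1),\dots,a(Z_n)$ and sets $U|x\rangle=a(X^{x})U|0\rangle^{\otimes n}$. It then checks orthogonality and verifies $UZ_mU^\dagger=a(Z_m)$ and $UX_mU^\dagger=a(X_m)$.

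This makes the paper's proof self-contained. However, the paper simply asserts that the lift can be chosen with $a(P_i)=Q_i$ \emph{exactly}. Your Pauli correction $R$ supplies precisely the justification for that step, since flipping signs of $a(Z_m),a(X_m)$ is the same as composing with conjugation by a Pauli. Your $P_i^2$ argument, which reduces the residual phase ambiguity to $\pm1$, completes it. So your route uses a standard black box in place of an explicit construction, and in exchange handles phases transparently.

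Your caveat about the hypothesis is well founded and applies equally to the paper. The paper's proof tacitly switches to ``an isomorphism between groups $G_1$ and $G_2$,'' which is stronger than what the lemma states. With only the pairwise relations as written, consider $(X_1,X_2,X_3,X_1X_2X_3)$ and $(X_1,X_2,X_3,X_4)$ on $n\ge 4$ qubits. All elements commute, and no square or pairwise product equals a phase times a member, so both hypotheses hold vacuously. Yet no Clifford maps one set to the other. Likewise, $X$ versus $\mathrm iX$ for $J=1$ satisfies the stated hypotheses but cannot be related by conjugation. Stating the hypothesis as a phase-preserving isomorphism of the generated groups, as you propose, is the right repair.
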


\begin{proof}
We can equivalently consider an isomorphism between groups $G_1=\langle P_1,\dots,P_J \rangle$ and $G_2=\langle Q_1,\dots,Q_J \rangle$. Using the symplectic representation (see Section~\ref{app:symplectic representation} for a brief introduction), for two subspaces $V_1=\mathsf{Span}\{v(P_1),\dots,v(P_J)\}$ and $V_2=\mathsf{Span}\{v(Q_1),\dots,v(Q_J)\}$, a Clifford transformation corresponds to a linear isometry from $V_1$ to $V_2$. By Witt's theorem for symplectic spaces \cite{Artin1957Geometric}, this isometry can be extended to a linear symplectic automorphism of the global space $\mathbb Z_2^{2n}$. Returning to the Pauli group, we denote this automorphism by $a(\cdot)$ that satisfies 
\begin{equation} \label{eq:automorphism between Paulis}
    a(P_i)=Q_i,\ \forall i=1,\dots,J.
\end{equation} 
Thus, our goal is to find a Clifford transformation that implements a permutation on the whole Pauli group and coincides with Eq.~\eqref{eq:automorphism between Paulis}. Equivalently, we need to identify a Clifford unitary $U$ such that $UZ_mU^\dagger=a(Z_m)$ and $UX_mU^\dagger=a(X_m)$ for $m=1,\dots,n$.

Following the standard textbook treatment \cite{Gottesman2018IQC}, we can construct the action of $U$ on the computational basis. First, we note that
\begin{equation}
    U\ket{0}^{\otimes n}=a(Z_m)U\ket{0}^{\otimes n}, \forall m=1,\dots,n
\end{equation}
must be the normalized state vector stabilized by $a(Z_1),\dots,a(Z_n)$. Then we can define $U$ for other basis vectors
\begin{equation}
    \begin{aligned}
        U\ket{x_1,\dots,x_n}&=UX^{x_1}\cdots X^{x_n}\ket{0}^{\otimes n} \\
        &=a(X_1)^{x_1}\cdots a(X_n)^{x_n} U\ket{0}^{\otimes n} \\
        &=a(X_1^{x_1}\cdots X_n^{x_n} )U\ket{0}^{\otimes n}
    \end{aligned}
\end{equation}
for $(x_1,\dots,x_n)\in\mathbb Z_2^n$. We can verify that $U$ indeed preserves the inner product between vectors, since
\begin{equation}
    \begin{aligned}
        \bra{x_1',\dots,x'_n}U^\dagger U\ket{x_1,\dots,x_n} &=\bra{0}^{\otimes n} U^\dagger a(X_1^{x'_1+x_1}\cdots X_n^{x'_n+x_n}) U\ket{0}^{\otimes n}\\
        &= \bra{0}^{\otimes n} U^\dagger a(X_1^{x'_1+x_1}\cdots X_n^{x'_n+x_n}Z_m) U\ket{0}^{\otimes n}\\
        &=(-1)^{x'_m+x_m}\bra{0}^{\otimes n} U^\dagger a(Z_mX_1^{x'_1+x_1}\cdots X_n^{x'_n+x_n}) U\ket{0}^{\otimes n}\\
        &= (-1)^{x'_m+x_m}\bra{x_1',\dots,x'_n}U^\dagger U\ket{x_1,\dots,x_n}
    \end{aligned}
\end{equation}
for any $m=1,\dots,n$. This implies that $\bra{x_1',\dots,x'_n}U^\dagger U\ket{x_1,\dots,x_n}=0$ if and only if $\bra{x_1',\dots,x'_n}\ket{x_1,\dots,x_n}=0$.

Finally, as a sanity check, let us verify by our construction for any $\ket{x_1,\dots,x_n}$ we have
\begin{equation}
    \begin{aligned}
        UZ_mU^\dagger U\ket{x_1,\dots,x_n}&=(-1)^{x_m}U\ket{x_1,\dots,x_n}\\
        &= (-1)^{x_m}a(X_1^{x_1}\cdots X_n^{x_n} )U\ket{0}^{\otimes n}\\
        &=(-1)^{x_m}a(X_1^{x_1}\cdots X_n^{x_n} )a(Z_m)U\ket{0}^{\otimes n}\\
        &=a(Z_m)a(X_1^{x_1}\cdots X_n^{x_n})U\ket{0}^{\otimes n}\\
        &=a(Z_m)U\ket{x_1,\dots,x_n},
    \end{aligned}
\end{equation}
and
\begin{equation}
    \begin{aligned}
        UX_mU^\dagger U\ket{x_1,\dots,x_n}&=U\ket{x_1,\dots,x_m+1,\dots,x_n}\\
        &= a(X_1^{x_1}\cdots X_m^{x_m+1}\cdots X_n^{x_n} )U\ket{0}^{\otimes n}\\
        &=a(X_m)a(X_1^{x_1}\cdots X_m^{x_m} \cdots X_n^{x_n})U\ket{0}^{\otimes n}\\
        &=a(X_m)U\ket{x_1,\dots,x_n}.
    \end{aligned}
\end{equation}
Therefore, we confirm that $UZ_mU^\dagger=a(Z_m)$ and $UX_mU^\dagger=a(X_m)$ for $m=1,\dots,n$.
\end{proof}
With these tools we are ready to prove Theorem~\ref{thm:HL syndrome free}.

\subsection{Formal proof of Theorem~\ref{thm:HL syndrome free}}
\begin{proof}
    Let $\mathscr E:=\langle E_1,\dots,E_r\rangle  \subseteq \mathcal P_n$ denote the Pauli subgroup generated by the noise operators. We denote by $\mathscr E_a$ a maximal abelian subgroup of $\mathscr E$. For our purpose it suffices to ignore the phase and consider $\overline{\mathscr E}:=\mathscr E/\{\pm 1,\pm \mathrm i\}$ and $\overline{\mathscr E}_a:=\mathscr E_a/\{\pm 1,\pm \mathrm i\}$. $H\notin\mathsf{Span}\{\langle E_1,\dots,E_r \rangle\}$ implies that there exists at least one Pauli component $Q=Q_{k_0}$ in the decomposition of $H=\sum_k\lambda_kQ_{k}$ such that $Q\notin \mathsf{Span}\{\mathscr E\}$. 

    We first consider the case where $Q$ commutes with $E_i$ for all $i=1,\dots,r$. Apparently, two algebras $\mathcal A=\mathsf{Span}\{\mathscr E\}$ and $\mathcal B=\mathsf{Span}\{\langle Q\rangle\}$ commute with each other and have only trivial overlap $\mathcal A \cap \mathcal B=\mathbb CI$. It is well known that they can induce a tensor product structure $\mathcal H=\mathcal H_G\otimes\mathcal H_L$ such that all $\{E_i\}$ are operators acting nontrivially only on the gauge subsystem $\mathcal H_G$ and $Q$ acts nontrivially only on the logical subsystem $L$ \cite{Zanardi2004PRL,Poulin2005PRL}. Here we provide an intuitive constructive proof for completeness and clarity.

    It is important to note that, by Lemma~\ref{lem:clifford existence} there exists a Clifford unitary transformation $\mathcal U(\cdot)=U(\cdot)U^\dagger$ that maps a set of independent generators of the group $\overline{\mathscr E}$ to single-qubit Pauli errors (here we perform the symplectic Gram-Schmidt orthogonalization introduced in Section~\ref{sec:SGSOP} for the generators)
    \begin{equation}\label{eq:canonical noise}
    U\overline{\mathscr E}U^\dagger
    =
    \left\langle Z_1,\dots,Z_\alpha,X_1,\dots,X_\beta
    \right\rangle
    \end{equation} 
    (for $\beta\le \alpha$), and maps $Q$ to 
    \begin{equation}
        UQU^\dagger=Z_{\alpha+1},
    \end{equation}
    as the commutation relations and multiplication relations are preserved. By definition, $\alpha$ is the number of independent generators for $\overline{\mathscr E}_a$. If $\alpha\le n-1$, then qubit $\alpha+1$ is already available to serve as a logical qubit. If $\alpha=n$, we append one noiseless ancillary qubit (labeled $n+1$) on which the noise acts trivially,
    i.e. we replace each noise $E_i$ by $E_i\otimes I_{n+1}$ and view all operators as acting on $n+1$ qubits. Then \eqref{eq:canonical noise} still holds with the same $\alpha=n$, and now qubit $\alpha+1=n+1$ is available. For convenience we introduce the total number of qubits $n'$, satisfying $n'=n$ if $\alpha\le n-1$ and $n'=n+1$ if $\alpha=n$. In this commuting case $\alpha \le n-1$ naturally holds. Using this partition, all the noise operators $E_i$ act nontrivially only on $\alpha$ gauge qubits and $Q$ acts nontrivially only on one logical qubit. By choosing the gauge group
    \begin{equation}\label{eq:gauge group choice}
        \mathcal G=\langle \mathrm i, U^\dagger Z_1 U, \dots,U^\dagger Z_{\alpha} U,U^\dagger X_{1} U, \dots, U^\dagger X_{\alpha} U \rangle,
    \end{equation}
    we have $E_iE_j \notin \mathcal \mathcal P_n-\mathcal G,\ \forall i,j$ and $Q \in \mathcal P_n-G$. Therefore, if Condition~(i) holds, by Theorem~\ref{thm:subsystem code HL condition} we can achieve the HL with a syndrome-free subsystem code without additional ancillae.

    Next, let us turn to the case where $Q$ anticommutes with at least one independent generator of $\overline{\mathscr E}$. Assume without loss of generality that $Q$ anticommutes with generators $g_1,\dots,g_\gamma$ and commutes with the other ones $g_{\gamma+1},\dots,g_m$. By a new choice of generators $g_1, g_1g_2,\dots,g_1g_\gamma$, we can ensure that $Q$ only anticommutes with one generator $g_1$ and commutes with all the others. Let $\tilde g_1$ be a Pauli operator that anticommutes with $g_1$ and commutes with the remaining generators (its existence is immediate in the symplectic representation by completing to a symplectic basis \cite{Nielsen2010Quantum}). Now we choose a Clifford unitary $U$ such that
    \begin{equation}
        U g_1 U^\dagger = Z_1,\ U \tilde g_1 U^\dagger = X_1,\ U Q U^\dagger = X_1 Z_{\alpha+1},
    \end{equation}
    and $U\,\overline{\mathcal E}\,U^\dagger$ still satisfies \eqref{eq:canonical noise}. This is again guaranteed by Lemma~\ref{lem:clifford existence}, since the target set $\{Z_1,\dots,Z_\alpha,X_1,\dots,X_\beta, X_1Z_{\alpha+1}\}$ has exactly the same multiplication and commutation relations as $\{g_1,\dots,g_m,Q\}$ by construction. A recovery operation $\mathcal R$ is simply a gauge reset in the rotated frame:
    \begin{equation}
        \mathcal R(\rho)=
        U\Big(\Tr_G[U^\dagger \rho U]\otimes \rho_G\Big)U^\dagger,
    \end{equation}
    where the gauge state on the first $\alpha$ qubits can be chosen to be an eigenstate of $X_1$ for maximizing the absolute value of $\Tr(\rho_GX_1)$ in the rotated basis. Finally, if $\alpha\le n-1$, then we take $n'=n$ and no ancilla is needed, proving the achievability of HL under Condition~(ii).
\end{proof}
\end{document}